\DeclarePairedDelimiter{\ceil}{\lceil}{\rceil}
\newtheorem{lemma}{Lemma}
\newtheorem{definition}{Definition}[section]
\newtheorem{example}{Example}[section]
\begin{document}

\title{CAHPHF: Context-Aware Hierarchical QoS Prediction with Hybrid Filtering}

\author{Ranjana~Roy~Chowdhury,
        Soumi~Chattopadhyay,~\IEEEmembership{Member,~IEEE,} and
        Chandranath~Adak,~\IEEEmembership{Member,~IEEE} 

\IEEEcompsocitemizethanks{\IEEEcompsocthanksitem R. R. Chowdhury and S. Chattopadhyay are with Department of CSE, Indian Institute of Information Technology Guwahati, India-781015.  

C. Adak is with Centre for Data Science, JIS Institute of Advanced Studies and Research, JIS University, Kolkata, India-700091.

e-mail: $soumi@iiitg.ac.in$

This work has been submitted to the IEEE for possible publication. Copyright may be transferred without notice, after which this version may no longer be accessible.}

}

\makeatletter
\long\def\@IEEEtitleabstractindextextbox#1{\parbox{0.922\textwidth}{#1}}
\makeatother

\IEEEtitleabstractindextext{
\begin{abstract}
With the proliferation of Internet-of-Things and continuous growth in the number of web services at the Internet-scale, the service recommendation is becoming a challenge nowadays. One of the prime aspects influencing the service recommendation is the Quality-of-Service (QoS) parameter, which depicts the performance of a web service. In general, the service provider furnishes the value of the QoS parameters during service deployment. However, in reality, the QoS values of service vary across different users, time, locations, etc. Therefore, estimating the QoS value of service before its execution is an important task, and thus the QoS prediction has gained significant research attention. Multiple approaches are available in the literature for predicting service QoS. However, these approaches are yet to reach the desired accuracy level. In this paper, we study the QoS prediction problem across different users, and propose a novel solution by taking into account the contextual information of both services and users. Our proposal includes two key steps: (a) hybrid filtering and (b) hierarchical prediction mechanism. 
On the one hand, the hybrid filtering method aims to obtain a set of similar users and services, given a target user and a service. On the other hand, the goal of the hierarchical prediction mechanism is to estimate the QoS value accurately by leveraging hierarchical neural-regression. 
We evaluate our framework on the publicly available WS-DREAM datasets. 
The experimental results show the outperformance of our framework over the major state-of-the-art approaches.

\end{abstract}

\begin{IEEEkeywords}
Hierarchical Neural-Regression, Hierarchical Prediction, Hybrid Filtering, Quality of Service (QoS) Prediction.
\end{IEEEkeywords}}

\maketitle

\IEEEraisesectionheading{\section{Introduction}\label{sec:intro}}
\noindent
Services computing is becoming an emerging field of research with the focus shift towards Everything-as-a-Service (XaaS). With the proliferation of Internet-of-Things (IoT), Machine to Machine (M2M) communication and smart technologies, the number of web services is increasing day by day. The massive growth of the number of functionally equivalent services 
introduces a challenge to the service recommendation research.
Numerous ways exist in the literature to recommend the services for a specific task. For example, the service recommendation can be accomplished based on user preferences \cite{DBLP:journals/access/WuZHZZ19}, where the user specifies a set of criteria on which the recommender engine chooses the set of services. Sometimes, the performance of the web services becomes the principal concern, and the recommendation is made based on the QoS parameters of the services \cite{DBLP:conf/icsoc/ZouJNWPG18,DBLP:conf/IEEEscc/LiWSZ17}. Often, the cost of the services \cite{DBLP:conf/IEEEscc/RamacherM12} is solely responsible for recommending the services. Seldom, the feature provided by the services \cite{DBLP:conf/IEEEscc/ChattopadhyayBM16} becomes the criteria for recommendation. 

In this paper, we concentrate on the QoS parameters for recommending the services.
The set of QoS parameters has been widely adopted to differentiate the functionally equivalent services in terms of their quality. Therefore, the QoS parameter often plays a crucial role in service recommendation. One of the fundamental challenge with QoS-based service recommendation is to obtain the exact QoS value of service before its execution. In general, the service provider supplies the QoS values of a service during its deployment. However, the QoS values of service often fluctuate depending on various factors,
such as users, time, locations, etc.
Therefore, determining the QoS value of service is an essential requirement, which drives our work in this paper.

A significant amount of work has been carried out in the literature to address the QoS prediction problem. One of the principal techniques for QoS prediction is collaborative filtering \cite{DBLP:conf/icws/WuQWZY16,DBLP:journals/soca/YuH16,DBLP:conf/icws/ZhouWGP15,sun2013personalized}. 
The main idea of this technique is to predict the QoS value of a target service to be invoked by a user on the basis of QoS values of a set of similar services invoked by a set of similar users. 
The collaborative filtering is primarily of two types: \emph{memory}{-based} and \emph{model}{-based}. 
The \emph{memory}{-based} collaborative filtering \cite{zheng2011qos,sun2013personalized,wu2017collaborative,DBLP:conf/IEEEscc/LiWSZ17,DBLP:conf/icsoc/ZouJNWPG18} is further classified into two different categories: \emph{user}{-based} and \emph{service}{-based}. In the \emph{user}{-based} collaborative filtering \cite{breese1998empirical}, the main idea is to obtain a set of users similar to the target user before predicting the target QoS value, whereas, for the \emph{service}{-based} collaborative filtering \cite{sarwar2001item}, the target QoS prediction is performed by taking into account the set of similar services. To improve the QoS prediction accuracy further, both the user-based and service-based collaborative filtering are combined to predict the target QoS value \cite{DBLP:journals/access/ChenWXZZC19,DBLP:conf/IEEEscc/LiWSZ17,DBLP:conf/icws/WuQWZY16,DBLP:journals/soca/YuH16,DBLP:conf/icws/ZhouWGP15,zheng2011qos,DBLP:conf/icsoc/ZouJNWPG18}. However, the memory-based technique suffers from the sparsity problem, which is the major limitation of this technique. 

To overcome the limitation of memory-based approaches, the \emph{model}{-based} collaborative filtering has been introduced, where different models can be learned according to the characteristics of the datasets for QoS value prediction. The matrix factorization \cite{DBLP:journals/tsc/ZhengMLK13,DBLP:conf/soca/XuYL13,DBLP:conf/icws/XiongWLLH18} is one of the popular model-based QoS prediction techniques, where the QoS prediction is done based on a decomposition of a user-service matrix into low-rank matrices and followed by its reconstruction. As an improvement of matrix factorization, the regularization has been introduced further \cite{DBLP:conf/IEEEscc/QiHSGL15,DBLP:conf/wocc/LuoZXZ14,lo2012extended}. Regression is another approach for model-based QoS prediction \cite{DBLP:journals/access/AlamAZKI19,DBLP:journals/access/BissingKCSR19}. A few variations of regression models \cite{DBLP:conf/IEEEscc/ZhangSL0L16,DBLP:conf/colcom/Chen16a} have also been proposed in this context to improve prediction accuracy. 
Another group of studies combined both memory-based and model-based techniques to obtain better prediction accuracy. For example, collaborative filtering can be combined with neural regression \cite{DBLP:conf/icsoc/ChattopadhyayB19} to predict the target QoS value. However, the prediction accuracy is still not up to the mark, which we address in 
this paper. 

Here,  
we propose a novel framework CAHPHF for QoS prediction. The crux of our proposal is to incorporate the contextual information of users and services while predicting the QoS value of a target service to be invoked by a target user. Sometimes, the contextual information carries some additional knowledge about users or services, which may be essential to improve the prediction accuracy.
This fact motivates us to undertake this proposal. 

Our proposed CAHPHF is a combination of memory-based and model-based techniques, and comprises two key phases: \emph{hybrid filtering} followed by \emph{hierarchical prediction}. In the hybrid filtering phase, we combine user-based and service-based modules by leveraging the contextual information of users or services. In the hierarchical prediction phase, we first handle the sparsity problem by filling up the matrix using collaborative filtering/matrix factorization. We then employ a hierarchical neural-regression module to predict the target QoS parameter. 
The hierarchical neural-regression module comprises two 
layers: while the \emph{first} layer predicts the target QoS value, the \emph{second} layer attempts to increase the prediction accuracy. 

We now briefly mention the major contributions below.

{{\bf{\em{(i)}}} 
We present a new framework (CAHPHF) to predict the QoS value of a service to be invoked by a user while improving the prediction accuracy as compared to the state-of-the-art approaches.

{{\bf{\em{(ii)}}} Our CAHPHF, on the one hand, takes advantage of the \emph{memory}-based approaches by adopting filtering. On the other hand, it leverages \emph{model}-based approaches by introducing a hierarchical prediction mechanism.

{{\bf{\em{(iii)}}} 
We propose hybrid filtering, which is a combination of \emph{user}-intensive and \emph{service}-intensive filtering modules. 
In the \emph{user}-intensive filtering module, the user is more influential than the services, whereas, in the \emph{service}-intensive filtering module, the service has more impact than the user. 
Each of the user and service-intensive modules is again a combination of user-based and service-based filtering. Additionally, in this user/service-based filtering, we consolidate contextual information with the similarity information of the users/services  intending to increase the accuracy. 

{{\bf{\em{(iv)}}} A hierarchical prediction mechanism for QoS prediction is also proposed here. In this mechanism, we first focus on predicting the QoS value of the target service to be invoked by the target user. We then concentrate on reducing the error obtained by our framework. 

{{\bf{\em{(v)}}} 
We performed an extensive empirical study on the publicly available datasets WS-DREAM  \cite{zheng2014investigating,DBLP:conf/issre/ZhangZL11}. 
We also analyzed the impact of context-sensitivity of users/services on prediction. 

The rest of the paper is organized as follows. 
Section \ref{sec:preliminaries} presents the overview of the problem with its formulation. Section \ref{sec:method} discusses the proposed framework in detail. Section \ref{sec:result} analyzes the experimental results.
Section \ref{related} presents literature review. Finally, Section \ref{sec:conclusion} concludes this paper.

\section{Overview and Problem Formulation}\label{sec:preliminaries}
\noindent
In this section, we formalize our problem. 
We begin with defining the notion of the QoS invocation log below.

\begin{definition}{[{\bf{QoS Invocation Log:}}]} 
The QoS invocation log ${\cal{Q}}$ is a 2-dimensional matrix, where each entry ${\cal{Q}}[i, j] \in \mathbb{R}^+$ of the matrix represents the value of the 
{QoS parameter $q$} of the service $s_j$ when invoked by the user $u_i$. 
 \hfill$\blacksquare$
\end{definition}

\noindent
It may be noted that most of the time, the QoS invocation log is a sparse matrix. If a user $u_i$ invoked a service $s_j$ in the past, 
the corresponding QoS value is recorded in the QoS invocation log and represented by the entry of ${\cal{Q}}[i, j]$. However, if a user $u_i$ never invoked a service $s_j$ in the past, the corresponding entry in QoS invocation log is represented by $0$.
In other words, ${\cal{Q}}[i, j] = 0$ implies the user $u_i$ has never invoked the service $s_j$. 

\begin{example}\label{example:qos}
Consider Table \ref{tab:qos} representing the QoS invocation log ${\cal{Q}}$ for a set of 10 users ${\cal{U}}$ and a set of 10 services ${\cal{S}}$. Here, we consider the response time as the QoS parameter. 
\begin{table}[!t]\makegapedcells
\tiny
\caption{Example of QoS invocation log}
\centering
\begin{tabular}{|c|c|c|c|c|c|c|c|c|c|c|}
\cline{2-11}
 \multicolumn{1}{c|}{}& $s_1$ & $s_2$ & $s_3$ & $s_4$ & $s_5$ & $s_6$ & $s_7$ & $s_8$ & $s_9$ & $s_{10}$\\ \hline
$u_1$ &  5.98 & 0.22 & 0.23 & 0 \textcolor{blue}{$\ast$} & 0.22 & 0.52 & 0.45 & 0.56 & 0.38 & 0 \\ 
$u_2$ &  2.13 & 0.26 & 0.27 & 0.25 & 0.25 & 0 & 0.65 & 0.64 & 0.43 & 0.72 \\ 
$u_3$ &  0.85 & 0 & 0.37 & 0.35 & 0.35 & 0.11 & 0.64 & 0 & 0.64 & 1.21 \\ 
$u_4$ &  0.69 & 0.22 & 0.23 & 0.22 & 0 & 0.34 & 0.76 & 0 & 0.37 & 0.55 \\ 
$u_5$ &  0.86 & 0 & 0.23 & 0.22 & 0.22 & 0.36 & 0.83 & 0.86 & 0.37 & 0.61 \\ 
$u_6$ &  1.83 & 0.25 & 0 & 0.26 & 0.23 & 0 & 0.89 & 0.92 & 0.42 & 0.86 \\ 
$u_7$ &  0.81 & 0.24 & 0.25 & 0.23 & 0.23 & 0.25 & 0 & 0.91 & 0.43 & 0 \\ 
$u_8$ &  0 & 0.24 & 0.25 & 0 & 0.26 & 0.33 & 0.59 & 0 & 0.42 & 1.85 \\ 
$u_9$ &  2.05 & 0.21 & 0 & 0.20 & 0.2 & 0.43 & 0.45 & 0.71 & 06 & 0.64 \\ 
$u_{10}$ &  0.86 & 0 & 0.22 & 0.2 & 0.19 & 0.38 & 0.59 & 0.62 & 0 & 0.49 \\ 
\hline
\end{tabular}\label{tab:qos}
\end{table}
\noindent 
${\cal{Q}}[i, j]$ represents the value of the response time (in millisecond) of $s_j \in {\cal{S}}$ invoked by $u_i \in {\cal{U}}$. 
For example, ${\cal{Q}}[1, 1] = 5.98$ represents the value of the response time of $s_1$ invoked by $u_1$. 
 \hfill $\blacksquare$ 
\end{example}

\subsection{Problem Formulation}\label{sec:problem}
\noindent
We are given:
\begin{itemize}
 \item[---] a set of users ${\cal{U}} = \{u_1, u_2, \ldots, u_n\}$
 
 \item[---] $\forall u_i \in {\cal{U}}$, a contextual information $\alpha_i = (\phi_i, \psi_i)$.
 In this paper, we consider the location information as the contextual information of a user. $\alpha_i$ is represented by a 2-tuple, where $\phi_i$ and $\psi_i$ represent the latitude and the longitude of the \emph{user} respectively.
 
 \item[---] a set of services ${\cal{S}} = \{s_1, s_2, \ldots, s_m\}$
 
 \item[---] $\forall s_i \in {\cal{S}}$, a contextual information $\beta_i = (\phi_i, \psi_i)$. Here also we consider the location information as the contextual information. Similar to $\alpha_i$, $\beta_i$ is also represented by a 2-tuple, where $\phi_i$ and $\psi_i$ represent the latitude and the longitude of the \emph{service} respectively.
 
 \item[---] $\forall u_i \in {\cal{U}}$, a set of services ${\cal{S}}_i \subseteq {\cal{S}}$ invoked by $u_i$
 
 \item[---] $\forall s_i \in {\cal{S}}$, a set of users ${\cal{U}}_i \subseteq {\cal{U}}$ invoked $s_i$
 
 \item[---] a QoS parameter $q$ of the services
 
 \item[---] a QoS invocation log ${\cal{Q}}$ 
 
 \item[---] a target user $u_{t_1}$ and a target service $s_{t_2}$
\end{itemize}

\noindent
{The objective} of this work is to estimate accurately the QoS value (say, \emph{target QoS value}) of $s_{t_2}$ when it is to be invoked by $u_{t_1}$. 

\begin{example}\label{example:problem}
 In Example \ref{example:qos}, the objective of the QoS prediction problem is to predict the value of the response time of $s_4$ to be invoked by $u_1$ (marked as \textcolor{blue}{$\ast$} in Table \ref{tab:qos}).
 \hfill $\blacksquare$ 
\end{example}

\begin{figure*}
    \centering
 	\includegraphics[width=0.7\linewidth]{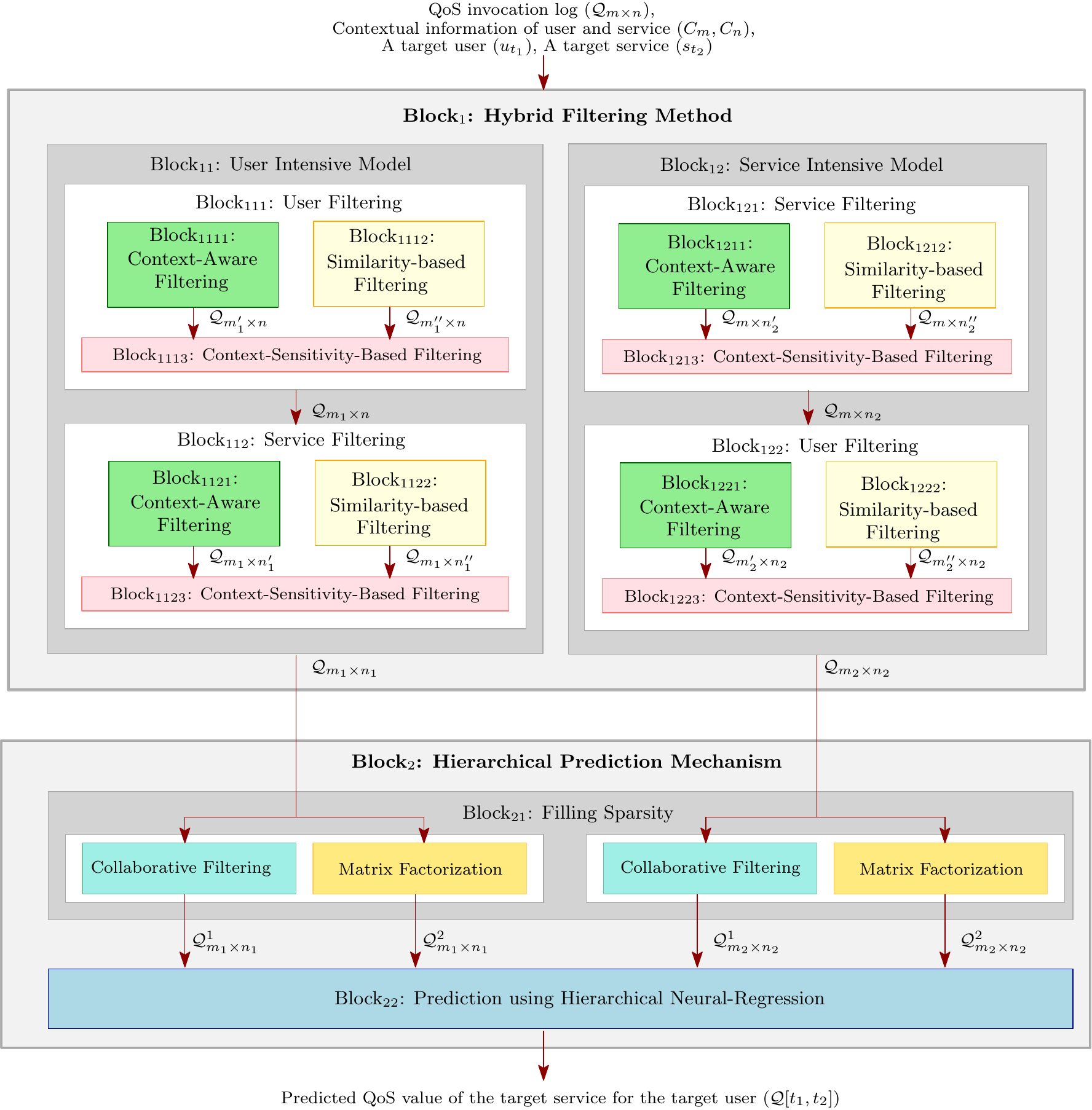}
 	\caption{Our solution framework (CAHPHF)}
 	\label{fig:architecture}
\end{figure*}

\section{Detailed Methodology}\label{sec:method}
\noindent
In this section, we discuss our solution framework to predict the target QoS value. Fig. \ref{fig:architecture} depicts the architecture of our framework (CAHPHF). The CAHPHF comprises two key steps, a \emph{hybrid filtering} method followed by a \emph{hierarchical prediction} mechanism, 
which are elaborated in the next two subsections.

\subsection{Hybrid Filtering Method}
\noindent
This is the first phase of CAHPHF (referred to Block$_1$ of Fig. \ref{fig:architecture}). Given a target user and a target service, in this phase, we filter the set of users and the services considering different perspectives and principles. 

We first consider two different aspects of filtering: 
\begin{itemize}
 \item {\em{User-intensive filtering}}: In this case, we first filter the set of users and by leveraging refined user information, we filter the set of services. 
 \item {\em{Service-intensive filtering}}: In this case, we first filter the set of services. We further filter the set of users utilizing the filtered service information.
\end{itemize}
\noindent
{While} filtering the users/services, we use two different principles:
\begin{itemize}
 \item Filtering based on {\em{contextual information}} of the users or services; 
 \item Filtering based on {\em{similarity information}} in terms of the historical QoS record.
\end{itemize}
\noindent
{Once} we obtain the filtered users/services from the above two steps, we aggregate them by analyzing the context-sensitivity of the users/services to generate the final set of similar users/services, which is the output of the hybrid filtering.

We now illustrate each of the techniques mentioned above. We begin with discussing different filtering principles. We then elaborate on the inclusion of each of these approaches in our user-intensive and service-intensive filtering methods.

\subsubsection{{\bf{Context-Aware Filtering}}}\label{subsubsec:cf}
\noindent
As mentioned earlier in Section \ref{sec:problem}, we use the location information of the users/services as the contextual information in this paper. More specifically, we use latitude and longitude to refer to the location. Before discussing the details of context-aware filtering, we first define the contextual distance between two users/services. Here, we use the Haversine distance \cite{wang2019qos} function to refer to the contextual distance. 

\begin{definition}{{\bf{[Haversine Distance $HD (\gamma_i, \gamma_j)$]:}}} Given the location of two users/services $\gamma_i$ and $\gamma_j$, the Haversine distance is computed as:
\begin{equation}\scriptsize
 \begin{split}
  HD (\gamma_i, \gamma_j) = \quad\quad\quad\quad \quad\quad\quad \quad\quad\quad\quad\quad\quad \quad\quad\quad \quad\quad\quad\quad\quad\quad \quad\quad\quad\\
  2r \times arcsin\sqrt{sin^2\left(\frac{\phi_j - \phi_i}{2}\right) + cos(\phi_i)cos(\phi_j)sin^2\left(\frac{\psi_j - \psi_i}{2}\right)}
 \end{split}
\end{equation}

\noindent
where $\gamma_i = (\phi_i, \psi_i),~ \gamma_j = (\phi_j, \psi_j)$; either $\gamma_i, \gamma_j \in \{\alpha_1, \alpha_2, \ldots, \alpha_n\}$
or
$\gamma_i, \gamma_j \in \{\beta_1, \beta_2, \ldots, \beta_m\}$. $r$ is the radius of the earth ($\approx 6371$ km) and $arcsin$ represents inverse sine function.
\hfill$\blacksquare$
\end{definition}

\noindent
{Given} a set of users/services and a target user/service, we compute a set of users/services similar to the target user/service in terms of contextual distance. Algorithm \ref{algo:contextualCluster} presents the formal algorithm for computing the contextually similar users/services.

\begin{algorithm}\scriptsize
  \caption{ClusteringBasedOnContextualInformation}
  \begin{algorithmic}[1]
    \State {\bf{Input:}} A set of users $\cal{U}$ (services $\cal{S}$), a target user $u_{t_1}$ (target service $s_{t_2}$)
    \State {\bf{Output:}} A set of users ${\cal{U}}^c_{t_1} \subseteq {\cal{U}}$ (services ${\cal{S}}^c_{t_2} \subseteq {\cal{S}}$) similar to $u_{t_1}$ ($s_{t_2}$)
    \State ${\cal{U}}^c_{t_1} = \{u_{t_1}\}$ (or ${\cal{S}}^c_{t_2} = \{s_{t_2}\}$); $\Upsilon = NULL$;
    \For {each $u_i \in {\cal{U}}^c_{t_1}$ (or $s_i \in {\cal{S}}^c_{t_2}$) and $u_i \notin \Upsilon$ (or $s_i \notin \Upsilon$)}
      \State $\Upsilon \leftarrow \Upsilon \cup \{u_i\}$ (or $\Upsilon \leftarrow \Upsilon \cup \{s_i\}$);
      \For {each $u_j \in {\cal{U}} \setminus {\cal{U}}^c_{t_1}$ (or $s_j \in {\cal{S}} \setminus {\cal{S}}^c_{t_2}$)}
        \If {$HD (\alpha_i, \alpha_j) \le T^u_c (or~HD (\beta_i, \beta_j) \le T^s_c)$}
          \State ${\cal{U}}^c_{t_1} \leftarrow {\cal{U}}^c_{t_1} \cup \{u_j\}$ (or ${\cal{S}}^c_{t_2} \leftarrow {\cal{S}}^c_{t_2} \cup \{s_j\}$);
        \EndIf
      \EndFor
    \EndFor
    \State return ${\cal{U}}^c_{t_1}$ (or ${\cal{S}}^c_{t_2}$);
  \end{algorithmic}
  \label{algo:contextualCluster}
\end{algorithm}

{The} main idea of Algorithm \ref{algo:contextualCluster} is to generate a cluster containing all the users/services similar to the target user/service in terms of contextual distance. A tunable threshold parameter $T^u_c (\text{or}~T^s_c)$ is chosen to filter the set of users (or services) to obtain the set of contextually similar users ${\cal{U}}^c_{t_1}$ (or services ${\cal{S}}^c_{t_2}$). 
If the distance between two users $u_i \in {\cal{U}}^c_{t_1}$ and $u_j \in ({\cal{U}} \setminus {\cal{U}}^c_{t_1})$ (or two services $s_i \in {\cal{S}}^c_{t_2}$ and $s_j \in ({\cal{S}} \setminus {\cal{S}}^c_{t_2})$) is less than $T^u_c (\text{or}~T^s_c)$, 
we add $u_j$ (or $s_j$) to ${\cal{U}}^c_{t_1}$ (or ${\cal{S}}^c_{t_2}$). 
We start with the target user (or service) and add it to ${\cal{U}}^c_{t_1}$ (or ${\cal{S}}^c_{t_2}$). Once a new user $u_i$ (or service $s_i$) is added to ${\cal{U}}^c_{t_1}$ (or ${\cal{S}}^c_{t_2}$), we follow the same procedure for $u_i$ (or $s_i$) as well. The algorithm terminates when there is no new user/service to be added. It may be noted that once the algorithm terminates, ${\cal{U}}^c_{t_1}$ (or ${\cal{S}}^c_{t_2}$) contains the set of users (or services) similar to the target user (or service), either directly or transitively. Here, we consider the transitive similarity, as this is required in the hierarchical prediction module to fill up the sparse matrix. 
We now consider a lemma with its proof as below.

\begin{lemma}\label{lemma:ca}
 $\forall u_i \in {\cal{U}}^c_{t_1}$, ${\cal{U}}^c_i = {\cal{U}}^c_{t_1}$, where ${\cal{U}}^c_i$ is the set of users similar to $u_i$ in terms of contextual distance.
 \hfill$\blacksquare$
\end{lemma}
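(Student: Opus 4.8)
The plan is to recognise ${\cal{U}}^c_{t_1}$ as a connected component of an undirected graph and then exploit the fact that connected components partition the vertex set. First I would observe that the Haversine distance is symmetric, $HD(\alpha_a,\alpha_b)=HD(\alpha_b,\alpha_a)$, directly from its defining formula (swapping the indices $i$ and $j$ only changes the signs inside the two $\sin^2$ terms, and $\sin^2$ is even). This lets me define an undirected graph $G=({\cal{U}},E)$ with $\{u_a,u_b\}\in E$ iff $HD(\alpha_a,\alpha_b)\le T^u_c$, and thus to speak of the connected component $C(u)$ of a vertex $u$ in $G$. Since reachability in an undirected graph is an equivalence relation, the family $\{C(u): u\in{\cal{U}}\}$ partitions ${\cal{U}}$; this partition property is the engine of the whole proof.

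Second, I would show that Algorithm~\ref{algo:contextualCluster}, run with target $u_{t_1}$, returns exactly $C(u_{t_1})$. For soundness (${\cal{U}}^c_{t_1}\subseteq C(u_{t_1})$) I would induct on the order in which users enter ${\cal{U}}^c_{t_1}$: the seed $u_{t_1}$ lies in $C(u_{t_1})$, and each later $u_j$ is inserted only because $HD(\alpha_i,\alpha_j)\le T^u_c$ for some $u_i$ already present, so $\{u_i,u_j\}\in E$ and, by the inductive hypothesis, $u_j$ is reachable from $u_{t_1}$. For completeness ($C(u_{t_1})\subseteq{\cal{U}}^c_{t_1}$) I would argue that the outer loop processes every user ever placed in ${\cal{U}}^c_{t_1}$ exactly once (the set $\Upsilon$ records the processed ones), and that processing $u_i$ scans all of ${\cal{U}}$ and inserts every $u_j$ with $HD(\alpha_i,\alpha_j)\le T^u_c$; as ${\cal{U}}$ is finite and $\Upsilon$ grows monotonically, the algorithm terminates, and at termination ${\cal{U}}^c_{t_1}$ is closed under taking $G$-neighbours. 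A $G$-closed set containing $u_{t_1}$ must contain every vertex on every $G$-path emanating from $u_{t_1}$, i.e.\ all of $C(u_{t_1})$. The identical argument with $\beta$ and $T^s_c$ in place of $\alpha$ and $T^u_c$ yields ${\cal{S}}^c_{t_2}=C(s_{t_2})$ in the analogous service graph.

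Finally, to close the lemma I would fix $u_i\in{\cal{U}}^c_{t_1}=C(u_{t_1})$. By definition ${\cal{U}}^c_i$ is the output of Algorithm~\ref{algo:contextualCluster} on target $u_i$, which by the previous paragraph equals $C(u_i)$. Since $u_i$ and $u_{t_1}$ lie in the same connected component, $C(u_i)=C(u_{t_1})$, and hence ${\cal{U}}^c_i={\cal{U}}^c_{t_1}$. I expect the only genuinely delicate point to be the termination-plus-closure step for Algorithm~\ref{algo:contextualCluster}: one has to be careful that the bookkeeping set $\Upsilon$ really guarantees that every inserted user's neighbourhood is eventually scanned, so that the informal stopping condition ``no new user is added'' is equivalent to the returned set being $G$-closed. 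Everything else is a routine consequence of reachability being an equivalence relation.
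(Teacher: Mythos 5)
Your proposal is correct, and at its core it rests on the same fact the paper uses: the Haversine distance is symmetric, so ``directly or transitively similar'' is a symmetric, transitive notion, and any two members of one cluster generate the same cluster. The difference is in how much you make explicit. The paper argues by a direct double inclusion: for $u_j \in {\cal{U}}^c_i$ it chains $u_j \sim u_i$ and $u_i \sim u_{t_1}$ (using commutativity of $HD$) to get $u_j \in {\cal{U}}^c_{t_1}$, and symmetrically for the reverse inclusion, taking for granted that Algorithm~\ref{algo:contextualCluster} really returns the full transitive closure of the threshold relation. You instead recast the output as the connected component $C(u_{t_1})$ of the graph whose edges are pairs within distance $T^u_c$, prove soundness and completeness of the algorithm (induction on insertion order, plus the termination-and-closure argument via $\Upsilon$), and then get the lemma for free from the fact that connected components partition ${\cal{U}}$, so $C(u_i) = C(u_{t_1})$ for any $u_i$ in the component. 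What your route buys is precisely the step the paper leaves implicit --- that the informal stopping condition of the algorithm is equivalent to the returned set being closed under threshold-neighbours, hence equal to the reachability class --- at the cost of a longer argument; the paper's version is shorter but silently assumes this algorithm-correctness fact. Your argument also transfers verbatim to the service case and to Lemma~\ref{lemma:csm}, exactly as the paper claims for its own proof.
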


\begin{proof}
 Consider $u_i \in {\cal{U}}^c_{t_1}$. To prove ${\cal{U}}^c_i = {\cal{U}}^c_{t_1}$, we need to show $\forall u_j \in {\cal{U}}^c_i$, $u_j \in {\cal{U}}^c_{t_1}$ and $\forall u_j \in {\cal{U}}^c_{t_1}$, $u_j \in {\cal{U}}^c_i$. Now consider a user $u_j (\ne u_i) \in {\cal{U}}^c_i$.  
 This implies $u_j$ is either directly or transitively similar to $u_i$ in terms of contextual distance. Again, $u_i \in {\cal{U}}^c_{t_1}$, and Haversine distance is commutative (i.e., $HD(\alpha_i, \alpha_{t_1}) = HD(\alpha_{t_1}, \alpha_i)$) imply $u_j$ is similar to $u_t$ (i.e., $u_j$ is similar to $u_i$ and $u_i$ is similar to $u_{t_1}$, thereby $u_j$ is similar to $u_{t_1}$). This in turn implies that $\forall u_j \in {\cal{U}}^c_i$, $u_j \in {\cal{U}}^c_{t_1}$. 
 
 Now consider a user $u_j (\ne u_i) \in {\cal{U}}^c_{t_1}$.  
 This implies $u_j$ is either directly or transitively similar to $u_{t_1}$ in terms of contextual distance according to Algorithm \ref{algo:contextualCluster}. $u_i$ is also similar to $u_{t_1}$, since, $u_i \in {\cal{U}}^c_{t_1}$. From the transitive property of Algorithm \ref{algo:contextualCluster} and the commutative property of Haversine distance, we can conclude $u_j$ is also similar to $u_i$ (i.e., $u_j$ is similar to $u_{t_1}$ and $u_i$ is similar to $u_{t_1}$, thereby, $u_j$ is similar to $u_i$). This implies, $\forall u_j \in {\cal{U}}^c_{t_1}$, $u_j \in {\cal{U}}^c_i$. 
 
 $\forall u_j \in {\cal{U}}^c_i$, $u_j \in {\cal{U}}^c_{t_1}$ implies  ${\cal{U}}^c_i \subseteq {\cal{U}}^c_{t_1}$.
 Again, $\forall u_j \in {\cal{U}}^c_{t_1}$, $u_j \in {\cal{U}}^c_i$ implies ${\cal{U}}^c_{t_1} \subseteq {\cal{U}}^c_i$. Both of them together imply ${\cal{U}}^c_i = {\cal{U}}^c_{t_1}$. Hence, the above lemma is true.
\end{proof}

\noindent
{The} same lemma is applicable for context-aware service filtering too, i.e., $\forall s_i \in {\cal{S}}^c_{t_2}$, ${\cal{S}}^c_i = {\cal{S}}^c_{t_2}$, where ${\cal{S}}^c_i$ is the set of services similar to $s_i$ in terms of contextual distance.


\subsubsection{{\bf{Similarity-based Filtering}}}
\label{subsubsec:sf}
\noindent
The objective of this module is to obtain a set of users/services correlated to the target user/service in terms of their past QoS invocation histories. In this paper, we use cosine similarity function \cite{zhu2015privacy} to compute the correlation between two users/services as defined below.

\begin{definition}{\bf{[Cosine Similarity Measure]:}} 
Given 2 users $u_i$ and $u_j$ (or 2 services $s_i$ and $s_j$) and their past QoS records from the QoS invocation log, the cosine similarity between the two users (or services) is defined in Equation \ref{eq:userCosine} (or Equation \ref{eq:serviceCosine}).
\begin{equation}
\label{eq:userCosine}
\scriptsize
 CSM (u_i, u_j) = \frac{\sum \limits_{s_k \in {\cal{S}}_i \cap {\cal{S}}_j} \left({\cal{Q}}[i, k] \times {\cal{Q}}[j, k]\right)}
 {\sqrt{\sum \limits_{s_k \in {\cal{S}}_i} ({\cal{Q}}[i, k])^2} \times {\sqrt{\sum \limits_{s_k \in {\cal{S}}_j} ({\cal{Q}}[j, k])^2}}}
\end{equation}
\begin{equation}
\label{eq:serviceCosine}
\scriptsize
 CSM (s_i, s_j) = \frac{\sum \limits_{u_k \in {\cal{U}}_i \cap {\cal{U}}_j} \left({\cal{Q}}[k, i] \times {\cal{Q}}[k, j]\right)}
 {\sqrt{\sum \limits_{u_k \in {\cal{U}}_i} ({\cal{Q}}[k, i])^2} \times {\sqrt{\sum \limits_{u_k \in {\cal{U}}_j} ({\cal{Q}}[k, j])^2}}}
\end{equation}
\end{definition}

\noindent
{Algorithm} \ref{algo:similarityBasedCluster} demonstrates the formal algorithm for generating the set of users/services correlated to the target user/service. Algorithm \ref{algo:similarityBasedCluster} is similar to Algorithm \ref{algo:contextualCluster}. The only difference is that instead of using Haversine distance function, cosine similarity measure is employed here. Similar to Algorithm \ref{algo:contextualCluster}, here also we use an external threshold parameter $T^u_s$ (or $T_s^s$) to filter the set of users/services. 
Like context-aware filtering (referred to Lemma \ref{lemma:ca}), the similarity-based filtering also has the same characteristics as stated by
Lemma \ref{lemma:csm}.

\begin{lemma}\label{lemma:csm}
 $\forall u_i \in {\cal{U}}^s_{t_1}$, ${\cal{U}}^s_i = {\cal{U}}^s_{t_1}$, where ${\cal{U}}^s_i$ is the set of users correlated to $u_i$ in terms of cosine similarity.
 \hfill$\blacksquare$
\end{lemma}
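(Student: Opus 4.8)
The plan is to mirror the proof of Lemma~\ref{lemma:ca} essentially verbatim, since Algorithm~\ref{algo:similarityBasedCluster} is structurally identical to Algorithm~\ref{algo:contextualCluster} with the Haversine distance $HD$ replaced by the cosine similarity measure $CSM$, and the threshold $T^u_c$ replaced by $T^u_s$. The only two properties of the underlying proximity function that the argument of Lemma~\ref{lemma:ca} actually used were (i) commutativity/symmetry of the function, and (ii) the transitive-closure construction performed by the clustering algorithm. So first I would note that $CSM$ is symmetric, i.e. $CSM(u_i, u_j) = CSM(u_j, u_i)$, which is immediate from Equation~\ref{eq:userCosine} because $\mathcal{S}_i \cap \mathcal{S}_j = \mathcal{S}_j \cap \mathcal{S}_i$ and the product ${\cal Q}[i,k]\times{\cal Q}[j,k]$ is commutative, and the two factors in the denominator may be swapped. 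Then I would invoke the transitive-closure nature of Algorithm~\ref{algo:similarityBasedCluster}: once a user $u_i$ enters ${\cal U}^s_{t_1}$, the algorithm re-scans all remaining users against $u_i$, so every user directly correlated to some member of the cluster is eventually absorbed.

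The core of the proof would then be the two inclusions. For ${\cal U}^s_i \subseteq {\cal U}^s_{t_1}$: take $u_j \in {\cal U}^s_i$; then $u_j$ is directly or transitively correlated to $u_i$, and since $u_i \in {\cal U}^s_{t_1}$, the chain of correlations from $u_j$ to $u_i$ concatenated with the chain from $u_i$ to $u_{t_1}$ shows $u_j$ is (transitively) correlated to $u_{t_1}$; by the construction of Algorithm~\ref{algo:similarityBasedCluster}, $u_j \in {\cal U}^s_{t_1}$. For ${\cal U}^s_{t_1} \subseteq {\cal U}^s_i$: take $u_j \in {\cal U}^s_{t_1}$; then $u_j$ is directly or transitively correlated to $u_{t_1}$, and $u_i$ is also correlated to $u_{t_1}$ since $u_i \in {\cal U}^s_{t_1}$; using symmetry of $CSM$ to reverse the $u_i$-to-$u_{t_1}$ chain and concatenating, $u_j$ is transitively correlated to $u_i$, hence $u_j \in {\cal U}^s_i$. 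The two inclusions together give ${\cal U}^s_i = {\cal U}^s_{t_1}$.

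I would also remark, as the excerpt does for the context-aware case, that the analogous statement holds for similarity-based service filtering: $\forall s_i \in {\cal S}^s_{t_2}$, ${\cal S}^s_i = {\cal S}^s_{t_2}$, by the symmetry of Equation~\ref{eq:serviceCosine} and the same transitive-closure reasoning. If brevity is desired the whole proof can be compressed to ``The proof is identical to that of Lemma~\ref{lemma:ca}, using the symmetry of the cosine similarity measure ($CSM(u_i,u_j)=CSM(u_j,u_i)$, immediate from Equation~\ref{eq:userCosine}) in place of the commutativity of the Haversine distance.''

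The subtle point worth a sentence of care — and the only place the two lemmas are not word-for-word interchangeable — is that $CSM$ is not a metric: it does not satisfy a triangle inequality, it is not even a distance (it measures similarity, so larger is closer), and the ``correlation'' relation $CSM(u_i,u_j)\ge T^u_s$ is not transitive as a binary relation. This is exactly why the lemma must be phrased in terms of the algorithm's transitive closure rather than in terms of the raw pairwise predicate; the honest statement is that ${\cal U}^s_{t_1}$ is a connected component of the graph whose vertices are users and whose edges join pairs with $CSM \ge T^u_s$, and connected components are well-defined regardless of whether adjacency is transitive. I expect the main (very mild) obstacle is simply making sure the write-up does not accidentally assert that $CSM$ is transitive or a metric; once the argument is anchored on graph-connectivity plus symmetry, it goes through with no computation.
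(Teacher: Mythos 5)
Your proposal is correct and follows essentially the same route as the paper: the paper itself only remarks that the cosine similarity measure is commutative and that the proof of Lemma~\ref{lemma:ca} carries over verbatim (two set inclusions via the transitive construction of Algorithm~\ref{algo:similarityBasedCluster}), which is exactly what you spell out. Your added observation that the argument rests on symmetry plus the algorithm's transitive closure (i.e., connected components of the similarity graph) rather than any metric or transitivity property of $CSM$ is a fair clarification but does not change the approach.
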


\noindent
{It} may be noted that the cosine similarity measure is also commutative. Therefore, using similar 
proof for Lemma \ref{lemma:ca},  we can prove Lemma \ref{lemma:csm} as well. The  same lemma is also applicable for similarity-based service filtering.

\begin{algorithm}[!t]\scriptsize
  \caption{ClusteringBasedOnSimilarity}
  \begin{algorithmic}[1]
    \State {\bf{Input:}} A set of users $\cal{U}$ (services $\cal{S}$), a target user $u_{t_1}$ (target service $s_{t_2}$) and QoS invocation log ${\cal{Q}}$
    \State {\bf{Output:}} A set of users ${\cal{U}}^s_{t_1} \subseteq {\cal{U}}$ (services ${\cal{S}}^s_{t_2} \subseteq {\cal{S}}$) similar to $u_{t_1}$ ($s_{t_2}$)
    \State ${\cal{U}}^s_{t_1} = \{u_{t_1}\}$ (or ${\cal{S}}^s_{t_2} = \{s_{t_2}\}$); $\Upsilon = \phi$;
    \For {each $u_j \in {\cal{U}}^s_{t_1}$ (or $s_j \in {\cal{S}}^s_{t_2}$) and $u_j \notin \Upsilon$ (or $s_j \notin \Upsilon$)}
      \State $\Upsilon \leftarrow \Upsilon \cup \{u_j\}$ (or $\Upsilon \leftarrow \Upsilon \cup \{s_j\}$);
      \For {For each $u_i \in ({\cal{U}} \setminus {\cal{U}}^s_{t_1})$ (or $s_i \in ({\cal{S}} \setminus {\cal{S}}^s_{t_2})$)}
        \If {$CS (u_i, u_j) \ge T^u_s (or~CS (s_i, s_j) \ge T^s_s)$}
          \State ${\cal{U}}^s_{t_1} \leftarrow {\cal{U}}^s_{t_1} \cup \{u_i\}$ (or ${\cal{S}}^s_{t_2} \leftarrow {\cal{S}}^s_{t_2} \cup \{s_i\}$);
        \EndIf
      \EndFor
    \EndFor
    \State return ${\cal{U}}^s_{t_1}$ (or ${\cal{S}}^s_{t_2}$);
  \end{algorithmic}
  \label{algo:similarityBasedCluster}
\end{algorithm}

\subsubsection{{\bf{Context-Sensitivity-based Filtering}}}
\label{subsubsec:fcs}
\noindent
The goal of this module is to aggregate the results obtained by both context-aware filtering and similarity-based filtering modules by analyzing the context-sensitivity. Once we have a set of contextually similar users ${\cal{U}}^c_{t_1}$ (or services ${\cal{S}}^c_{t_2}$) and a set of correlated users ${\cal{U}}^s_{t_1}$ (or services ${\cal{S}}^s_{t_2}$) in terms of their past QoS records, the next objective is to aggregate them. The aggregation is performed on the basis of the similarity between these two sets. If the set of common users (or services) between ${\cal{U}}^c_{t_1}$ and ${\cal{U}}^s_{t_1}$ (or, ${\cal{S}}^c_{t_2}$ and ${\cal{S}}^s_{t_2}$) is more than a given threshold $T^u_{cs}$ (or, $T^s_{cs}$), this implies the QoS invocation pattern of the target user (or service) is context sensitive. In this case, we proceed with the intersection set. Otherwise, we proceed with set of correlated users/services. We now mathematically define, the filtering based on context-sensitivity as below. 
\begin{equation}\label{eq:csu}
 {\cal{U}}^{cs}_{t_1} = 
 \begin{cases}
    {\cal{U}}^c_{t_1} \cap {\cal{U}}^s_{t_1} & \text{, if } {\cal{U}}^c_{t_1} \cap {\cal{U}}^s_{t_1} \ge T^u_{cs}\\
    {\cal{U}}^s_{t_1} & \text{, otherwise}
 \end{cases}
\end{equation}

\begin{equation}
 {\cal{S}}^{cs}_{t_2} = 
 \begin{cases}
    {\cal{S}}^c_{t_2} \cap {\cal{S}}^s_{t_2} & \text{, if } {\cal{S}}^c_{t_2} \cap {\cal{S}}^s_{t_2} \ge T^s_{cs}\\
    {\cal{S}}^s_{t_2} & \text{, otherwise}
 \end{cases}
\end{equation}

\noindent
We now prove the following lemma for filtering based on context-sensitivity, as follows.

\begin{lemma}\label{lemma:cs}
 $\forall u_i \in {\cal{U}}^{cs}_{t_1}$, ${\cal{U}}^{cs}_i = {\cal{U}}^{cs}_{t_1}$, where ${\cal{U}}^{cs}_i$ is the set of users similar to $u_i$ based on context-sensitivity.
 \hfill$\blacksquare$
\end{lemma}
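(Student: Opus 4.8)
The plan is to mirror the structure of the proofs of Lemmas \ref{lemma:ca} and \ref{lemma:csm}, reducing the statement to a case analysis on the two branches of the definition of ${\cal{U}}^{cs}_{t_1}$ in Equation \ref{eq:csu}. Fix $u_i \in {\cal{U}}^{cs}_{t_1}$; the goal is to show that the set produced by the context-sensitivity aggregation when anchored at $u_i$ coincides with the one anchored at $u_{t_1}$. The two ingredients I would lean on throughout are: (i) Lemmas \ref{lemma:ca} and \ref{lemma:csm}, which say that within a contextual (resp. similarity) cluster every member induces exactly the same cluster; and (ii) the elementary observation, already implicit in those lemmas, that the transitive closure turns both the Haversine-based and the cosine-similarity-based relations into equivalence relations, so their clusters partition ${\cal{U}}$ and are pairwise either equal or disjoint.

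First case: $|{\cal{U}}^c_{t_1} \cap {\cal{U}}^s_{t_1}| \ge T^u_{cs}$, so ${\cal{U}}^{cs}_{t_1} = {\cal{U}}^c_{t_1} \cap {\cal{U}}^s_{t_1}$. Then $u_i$ lies simultaneously in ${\cal{U}}^c_{t_1}$ and in ${\cal{U}}^s_{t_1}$; applying Lemma \ref{lemma:ca} gives ${\cal{U}}^c_i = {\cal{U}}^c_{t_1}$ and applying Lemma \ref{lemma:csm} gives ${\cal{U}}^s_i = {\cal{U}}^s_{t_1}$. Hence ${\cal{U}}^c_i \cap {\cal{U}}^s_i = {\cal{U}}^c_{t_1} \cap {\cal{U}}^s_{t_1}$, so the same branch of Equation \ref{eq:csu} is selected for $u_i$, yielding ${\cal{U}}^{cs}_i = {\cal{U}}^c_i \cap {\cal{U}}^s_i = {\cal{U}}^{cs}_{t_1}$. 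Second case: the threshold is not met, so ${\cal{U}}^{cs}_{t_1} = {\cal{U}}^s_{t_1}$; here $u_i \in {\cal{U}}^s_{t_1}$, so Lemma \ref{lemma:csm} gives ${\cal{U}}^s_i = {\cal{U}}^s_{t_1}$, and it remains to check that the ``otherwise'' branch is also the one that defines ${\cal{U}}^{cs}_i$; granting that, ${\cal{U}}^{cs}_i = {\cal{U}}^s_i = {\cal{U}}^s_{t_1} = {\cal{U}}^{cs}_{t_1}$. The service-side statement is obtained verbatim with ${\cal{S}}$ in place of ${\cal{U}}$.

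I expect the delicate point to be precisely that last check: showing that re-anchoring at $u_i$ cannot flip the context-sensitivity verdict. In the intersection case this is automatic, since $u_i$ provably inherits membership in both ${\cal{U}}^c_{t_1}$ and ${\cal{U}}^s_{t_1}$ and Lemmas \ref{lemma:ca} and \ref{lemma:csm} then force the relevant cardinality to be unchanged. In the ``otherwise'' case the subtlety is that $u_i$ need not lie in ${\cal{U}}^c_{t_1}$, so a priori ${\cal{U}}^c_i$ is a different contextual cluster and $|{\cal{U}}^c_i \cap {\cal{U}}^s_i|$ need not match $|{\cal{U}}^c_{t_1} \cap {\cal{U}}^s_{t_1}|$. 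The clean way around this is to read the context-sensitivity flag as a property of the similarity cluster rather than of the individual anchor: since every user of ${\cal{U}}^s_{t_1}$ induces the very same similarity set by Lemma \ref{lemma:csm}, the aggregation rule of Equation \ref{eq:csu} is applied once per similarity cluster, so all its members share the verdict of $u_{t_1}$ and hence the same branch. I would state this explicitly, noting that the only external facts used are the commutativity of the Haversine distance and of the cosine similarity measure, both already recorded in the paper, together with Lemmas \ref{lemma:ca} and \ref{lemma:csm}.
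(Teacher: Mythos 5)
Your core reduction---fix $u_i$, invoke Lemmas \ref{lemma:ca} and \ref{lemma:csm} to transfer the contextual and similarity clusters from $u_{t_1}$ to $u_i$, and then observe that Equation \ref{eq:csu} is evaluated against the same threshold $T^u_{cs}$---is exactly the paper's argument; the paper states it in three lines with no case split. Your intersection branch is airtight and is really the only case the paper's proof covers: there $u_i \in {\cal{U}}^c_{t_1} \cap {\cal{U}}^s_{t_1}$, so both lemmas apply, the intersection and hence its cardinality are unchanged, and the same branch is selected for $u_i$.

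Where you go beyond the paper is the ``otherwise'' branch, and your worry there is well founded: membership in ${\cal{U}}^{cs}_{t_1} = {\cal{U}}^s_{t_1}$ does not put $u_i$ in ${\cal{U}}^c_{t_1}$, so Lemma \ref{lemma:ca}---which the paper's proof invokes for every $u_i \in {\cal{U}}^{cs}_{t_1}$---is simply not available, and under the literal anchor-wise reading of Equation \ref{eq:csu} the verdict can indeed flip: if ${\cal{U}}^c_i$ is a different contextual cluster whose overlap with ${\cal{U}}^s_i = {\cal{U}}^s_{t_1}$ reaches $T^u_{cs}$, then ${\cal{U}}^{cs}_i = {\cal{U}}^c_i \cap {\cal{U}}^s_{t_1}$, which excludes $u_{t_1}$ (distinct contextual clusters are disjoint) and therefore differs from ${\cal{U}}^{cs}_{t_1}$. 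Your way out---treating the context-sensitivity verdict as a property of the similarity cluster, applied once per cluster (equivalently, once for the target user)---is a reinterpretation of the definition rather than a consequence of it, and you are right that it must be stated explicitly; with that convention the branch is forced and your proof closes. So: on the case the paper actually handles you take the same route, and your flag on the second case identifies a real gap that the paper's own proof silently skips---just be clear that, absent the stated convention, that case of the lemma is not derivable from Equation \ref{eq:csu} as written.
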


\begin{proof}
 Consider $\forall u_i \in {\cal{U}}^{cs}_{t_1}$. Now, from Lemma \ref{lemma:ca}, we have ${\cal{U}}^{c}_i = {\cal{U}}^{c}_{t_1}$ and from Lemma \ref{lemma:csm}, we have ${\cal{U}}^{s}_i = {\cal{U}}^{s}_{t_1}$. Using the same threshold $T^u_{cs}$, as considered in Equation \ref{eq:csu}, we have ${\cal{U}}^{cs}_i = {\cal{U}}^{cs}_{t_1}$. Hence, the above lemma is valid.
\end{proof}

\noindent
{Similar} to the above two filtering techniques, Lemma \ref{lemma:cs} is also valid for service filtering based on context-sensitivity.

We now discuss the user-intensive and service-intensive filterings in the next two subsections involving the filtering modules (i.e., context-aware, similarity-based, context-sensitivity-based) discussed above. 

\subsubsection{\bf{User-Intensive Filtering}}
\noindent
Given a target service and a target user, in this step (referred to Block$_{11}$ of Fig. \ref{fig:architecture}), we first compute a set of users similar to the target user. Once we obtain the set of similar users, employing this information, we generate a set of services similar to the target service. Since in this filtering, we are leveraging the relevant user information to compute the set of similar services, this filtering is referred to \emph{user-intensive}. We now demonstrate this approach in details.

As discussed earlier, the user-intensive filtering module comprises of two basic blocks: (a) Block$_{111}$: user filtering block and (b) Block$_{112}$: service filtering block (referred to Fig. \ref{fig:architecture}). The main purpose of the {\em{user filtering}} block is to filter the set of users. Given a target user $u_{t_1}$, the context-aware filtering and the similarity-based filtering are performed independently on the set of users, as discussed in Sections \ref{subsubsec:cf} and \ref{subsubsec:sf}, to generate a set of contextually similar users ${\cal{U}}^c_{t_1}$ and the set of correlated users ${\cal{U}}^s_{t_1}$, respectively. Once both the filterings are done, on the basis of context-sensitivity, ${\cal{U}}^c_{t_1}$ and ${\cal{U}}^s_{t_1}$ are aggregated to obtain the final set of similar users ${\cal{U}}^{cs}_{t_1}$. Using the information of ${\cal{U}}^{cs}_{t_1}$, the {\em{service filtering}} is performed similarly. 
Given a target service $s_{t_2}$, context-aware filtering and similarity-based filtering are performed first on the set of services to generate a set of contextually similar services ${\cal{S}}^c_{t_2}$ and the set of correlated services ${\cal{S}}^s_{t_2}$, respectively. Finally, on the basis of context-sensitivity, ${\cal{S}}^{cs}_{t_2}$ is generated from ${\cal{S}}^c_{t_2}$ and ${\cal{S}}^s_{t_2}$. Algorithm \ref{algo:uiFiltering} presents the formal algorithm for user-intensive filtering.

\begin{algorithm}[!ht]\scriptsize
  \caption{UserIntensiveFiltering}
  \begin{algorithmic}[1]
    \State {\bf{Input:}} A set of users $\cal{U}$, A set of services $\cal{S}$, a target user $u_{t_1}$, a target service $s_{t_2}$, and the QoS invocation log ${\cal{Q}}$
    \State {\bf{Output:}} A filtered set of users ${\cal{U}}^{ui}_{t_1}$ and a filtered set of services ${\cal{S}}^{ui}_{t_2}$, and the modified QoS invocation log ${\cal{Q}}_{ui}$ 
    \State ${\cal{U}}^c_{t_1} \leftarrow $ ClusteringBasedOnContextualInformation(${\cal{U}}, u_{t_1}$); 
    \State ${\cal{U}}^s_{t_1} \leftarrow $ ClusteringBasedOnSimilarity(${\cal{U}}, u_{t_1}, {\cal{Q}}$); 
    \State ${\cal{U}}^{cs}_{t_1} \leftarrow$ FilteringBasedOnContextSensitivity(${\cal{U}}^c_{t_1}, {\cal{U}}^s_{t_1}$);
    \Comment\textcolor{darkgray}{description of this algorithm is discussed in Section \ref{subsubsec:fcs}}
    \State ${\cal{Q}}'_{ui} \leftarrow$ The set of rows from ${\cal{Q}}$ corresponding to the users in ${\cal{U}}^{cs}_{t_1}$;
    \State ${\cal{S}}^c_{t_2} \leftarrow $ ClusteringBasedOnContextualInformation(${\cal{S}}, s_{t_2}$); 
    \State ${\cal{S}}^s_{t_2} \leftarrow $ ClusteringBasedOnSimilarity(${\cal{S}}, s_{t_2}, {\cal{Q}}'_{ui}$); 
    \State ${\cal{S}}^{cs}_{t_2} \leftarrow$ FilteringBasedOnContextSensitivity(${\cal{S}}^c_{t_2}, {\cal{S}}^s_{t_2}$);
    \State ${\cal{Q}}_{ui} \leftarrow$ The set of columns from ${\cal{Q}}'_{ui}$ corresponding to the services in ${\cal{S}}^{cs}_{t_2}$;
    
    \State ${\cal{U}}^{ui}_{t_1} \leftarrow {\cal{U}}^{cs}_{t_1}$; ${\cal{S}}^{ui}_{t_2} \leftarrow {\cal{S}}^{cs}_{t_2}$;
    
    \State return ${\cal{U}}^{ui}_{t_1}$, ${\cal{S}}^{ui}_{t_2}$, ${\cal{Q}}_{ui}$;
  \end{algorithmic}
  \label{algo:uiFiltering}
\end{algorithm}

\subsubsection{\bf{Service-Intensive Filtering}}
\noindent
Given a target service and a target user, here (referred to Block$_{12}$ of Fig. \ref{fig:architecture}), initially, we compute a set of services similar to the target service. We then leverage related service information to obtain a set of users similar to the target user. Since in this filtering, we are utilizing relevant service information to compute the set of similar users, this filtering is termed as \emph{service-intensive}. Algorithm \ref{algo:siFiltering} presents the formal algorithm for service-intensive filtering. 
Algorithm \ref{algo:siFiltering} is similar to Algorithm \ref{algo:uiFiltering}, where the only difference is in the sequence of the user and service-intensive filterings. The set of services, in this case, is filtered before the set of users.


\begin{algorithm}[!htb]\scriptsize
  \caption{ServiceIntensiveFiltering}
  \begin{algorithmic}[1]
    \State {\bf{Input:}} A set of users $\cal{U}$, A set of services $\cal{S}$, a target user $u_{t_1}$, a target service $s_{t_2}$, and the QoS invocation log ${\cal{Q}}$
    \State {\bf{Output:}} A filtered set of users ${\cal{U}}^{si}_{t_1}$ and a filtered set of services ${\cal{S}}^{si}_{t_2}$, and the modified QoS invocation log ${\cal{Q}}_{si}$
    
    \State ${\cal{S}}^c_{t_2} \leftarrow $ ClusteringBasedOnContextualInformation(${\cal{S}}, s_{t_2}$); 
    \State ${\cal{S}}^s_{t_2} \leftarrow $ ClusteringBasedOnSimilarity(${\cal{S}}, s_{t_2}, {\cal{Q}}$); 
    \State ${\cal{S}}^{cs}_{t_2} \leftarrow$ FilteringBasedOnContextSensitivity(${\cal{S}}^c_{t_2}, {\cal{S}}^s_{t_2}$);
    \Comment\textcolor{darkgray}{description of this algorithm is discussed in Section \ref{subsubsec:fcs}}
    \State ${\cal{Q}}'_{si} \leftarrow$ The set of columns from ${\cal{Q}}$ corresponding to the services in ${\cal{S}}^{cs}_{t_2}$;
    
    \State ${\cal{U}}^c_{t_1} \leftarrow $ ClusteringBasedOnContextualInformation(${\cal{U}}, u_{t_1}$); 
    \State ${\cal{U}}^s_{t_1} \leftarrow $ ClusteringBasedOnSimilarity(${\cal{U}}, u_{t_1}, {\cal{Q}}'_{si}$); 
    \State ${\cal{U}}^{cs}_{t_1} \leftarrow$ FilteringBasedOnContextSensitivity(${\cal{U}}^c_{t_1}, {\cal{U}}^s_{t_1}$);
    \State ${\cal{Q}}_{si} \leftarrow$ The set of rows from ${\cal{Q}}'_{si}$ corresponding to the users in ${\cal{U}}^{cs}_{t_1}$;
       
    \State ${\cal{U}}^{si}_{t_1} \leftarrow {\cal{U}}^{cs}_{t_1}$; ${\cal{S}}^{si}_{t_2} \leftarrow {\cal{S}}^{cs}_{t_2}$;
    
    \State return ${\cal{U}}^{si}_{t_1}$, ${\cal{S}}^{si}_{t_2}$, ${\cal{Q}}_{si}$;
  \end{algorithmic}
  \label{algo:siFiltering}
\end{algorithm}

\noindent
{Once} we have the set of filtered users and services, we employ this information to predict the target QoS value. 
In the next section, we discuss our prediction mechanism.

\subsection{Hierarchical Prediction Mechanism}
\noindent
This is the second phase of our framework (referred to Block$_2$ of Fig. \ref{fig:architecture}). The main focus of this module is to predict the target QoS value. We employ a hierarchical neural network-based regression model to predict the QoS value. It may be noted that the filtered QoS invocation log, which is used by the neural regression module to predict the QoS value, is mostly a sparse matrix. Therefore, in the hierarchical prediction module, our first task is to fill up the absent QoS values in the sparse matrix before feeding it to the hierarchical neural network.

\subsubsection{Filling Sparsity}\label{subsubsec:fs}
\noindent
The goal of this module is to fill up the sparse invocation log matrix. In this paper, we employ collaborative filtering \cite{DBLP:conf/icsoc/ZouJNWPG18} and matrix factorization \cite{DBLP:conf/icws/AminCG12} to fill up the sparse matrix. We now briefly discuss these two approaches below.

\emph{\ref{subsubsec:fs}.1}~~{\bf{\em{Collaborative Filtering (CF)}}}: Collaborative filtering is one of the main approaches for QoS prediction. 
We first discuss the collaborative filtering method to fill up the filtered QoS invocation log matrix (${\cal{Q}}_{ui}$) obtained after user-intensive filtering. For every ${\cal{Q}}_{ui}[i, j] = 0$, we perform collaborative filtering. 
It may be noted that we do not need to filter the set of users/services anymore, since we have already performed filtering on the set of users and services in the hybrid filtering module. Moreover, it is clear from Lemma \ref{lemma:cs}, if we wish to filter the set of users/services again using our hybrid filtering module, the number of users/services will not reduce any further. Therefore, in collaborative filtering, we only need to predict the QoS value, which is done using the following two steps: (a) average QoS computation and (b) calculation of deviation. In the average QoS calculation, we take the weighted column average corresponding to service $s_j$, where the cosine similarity measures between $u_i$ and every other user in ${\cal{U}}^{ui}_t$ are used as the weights. Equation \ref{eq:ucfavg} shows the average QoS calculation for ${\cal{Q}}_{ui}[i, j]$, as denoted by $q_{ij}^{avg}$.
\begin{equation}\label{eq:ucfavg}\scriptsize
 q_{ij}^{avg} = \frac{\sum \limits_{u_k \in {\cal{U}}^{ui}_t} \Bigl(CSM (u_i, u_k) \times {\cal{Q}}_{ui}[k, j]\Bigr)}{\sum \limits_{u_k \in {\cal{U}}^{ui}_t} CSM(u_i, u_k)}
\end{equation}

\noindent
After having the average QoS value, we compute the deviation for accurate prediction. Here, we compute the deviation of the predicted value obtained by Equation \ref{eq:ucfavg} from the actual value across all services in ${\cal{S}}^{ui}_t$ and invoked by $u_i$. The weighted average of the deviation, considering the cosine similarity measures between $s_j$ and every other service in ${\cal{S}}^{ui}_t$ as the weights, is adjusted to $q_{ij}^{avg}$ to obtain the predicted value of ${\cal{Q}}_{ui}[i, j]$, as denoted by $q_{ij}^{pred}$. Equation \ref{eq:ucfdev} shows the computation for deviation.
\begin{equation}\label{eq:ucfdev}\scriptsize
 q_{ij}^{pred} = 
 \begin{cases} 
 & \left(q_{ij}^{avg} +
 \frac{\sum \limits_{\substack{s_k \in {\cal{S}}^{ui}_t \&\\ {\cal{Q}}_{ui}[i, k] \ne 0}} \left(CSM (s_j, s_k) \times |q_{ik}^{avg} - {\cal{Q}}_{ui}[i, k]|\right)}{\sum \limits_{\substack{s_k \in {\cal{S}}^{ui}_t \&\\ {\cal{Q}}_{ui}[i, k] \ne 0}} CSM(s_j, s_k)}\right)\\&\\
 & \quad \quad \quad \quad \quad \quad \quad \quad \quad \quad \quad \quad \text{, if } (q_{ik}^{avg} - {\cal{Q}}_{ui}[i, k]) \le 0\\
 &\\
 & \left(q_{ij}^{avg} -
 \frac{\sum \limits_{\substack{s_k \in {\cal{S}}^{ui}_t \&\\ {\cal{Q}}_{ui}[i, k] \ne 0}} \left(CSM (s_j, s_k) \times |q_{ik}^{avg} - {\cal{Q}}_{ui}[i, k]|\right)}{\sum \limits_{\substack{s_k \in {\cal{S}}^{ui}_t \&\\ {\cal{Q}}_{ui}[i, k] \ne 0}} CSM(s_j, s_k)}\right) \\&\\
 & \quad \quad \quad \quad \quad \quad \quad \quad \quad \quad \quad \quad \quad \quad \quad \quad \quad \quad \quad \quad \text{, otherwise} 
 \end{cases}
\end{equation}

\noindent
{Finally}, after employing the collaborative filtering to fill up every zero entry of ${\cal{Q}}_{ui}$, we have the modified QoS invocation log ${\cal{Q}}_{ui}^{cf}$, as defined below:
\begin{equation}\scriptsize
 {\cal{Q}}_{ui}^{cf}[i, j] = 
 \begin{cases}
    {\cal{Q}}_{ui}[i, j] & \text{, if } {\cal{Q}}_{ui}[i, j] \ne 0\\
    q_{ij}^{pred} & \text{, otherwise}
 \end{cases}
\end{equation}

\noindent
{The} collaborative filtering to fill up ${\cal{Q}}_{si}$ obtained after service-intensive filtering is exactly same as in user-intensive case. For every ${\cal{Q}}_{si}[i, j] = 0$, the collaborative filtering is performed. Equations \ref{eq:scfavg} and \ref{eq:scfdev} show the computations of average QoS value and deviation for each zero entry in ${\cal{Q}}_{si}$, respectively.
\begin{equation}\label{eq:scfavg}\scriptsize
 r_{ij}^{avg} = \frac{\sum \limits_{s_k \in {\cal{S}}^{si}_t} \left(CSM (s_j, s_k) \times {\cal{Q}}_{si}[i, k]\right)}{\sum \limits_{s_k \in {\cal{S}}^{si}_t} CSM(s_j, s_k)}
\end{equation}

\begin{equation}\label{eq:scfdev}\scriptsize
 r_{ij}^{pred} = 
 \begin{cases}
 & \left(r_{ij}^{avg} +
 \frac{\sum \limits_{\substack{u_k \in {\cal{U}}^{si}_t \&\\ {\cal{Q}}_{si}[k, j] \ne 0}} \left(CSM (u_i, u_k) \times |r_{kj}^{avg} - {\cal{Q}}_{si}[k, j]|\right)}{\sum \limits_{\substack{u_k \in {\cal{U}}^{si}_t \&\\ {\cal{Q}}_{si}[k, j] \ne 0}} CSM(u_i, u_k)}\right) \\
 &\\
 & \quad \quad \quad \quad \quad \quad \quad \quad \quad \quad \quad \quad \text{, if } (r_{kj}^{avg} - {\cal{Q}}_{si}[k, j]) \le 0\\
 &\\
 & \left(r_{ij}^{avg} -
 \frac{\sum \limits_{\substack{u_k \in {\cal{U}}^{si}_t \&\\ {\cal{Q}}_{si}[k, j] \ne 0}} \left(CSM (u_i, u_k) \times |r_{kj}^{avg} - {\cal{Q}}_{si}[k, j]|\right)}{\sum \limits_{\substack{u_k \in {\cal{U}}^{si}_t \&\\ {\cal{Q}}_{si}[k, j] \ne 0}} CSM(u_i, u_k)}\right) \\&\\
 & \quad \quad \quad \quad \quad \quad \quad \quad \quad \quad \quad \quad \quad \quad \quad \quad \quad \quad \quad \quad \text{, otherwise} 
 \end{cases}
\end{equation}

\noindent
{Finally}, after applying collaborative filtering on every zero entry of ${\cal{Q}}_{si}$, we have the modified QoS invocation log ${\cal{Q}}_{si}^{cf}$, as defined below:
\begin{equation}\scriptsize
 {\cal{Q}}_{si}^{cf}[i, j] = 
 \begin{cases}
    {\cal{Q}}_{si}[i, j] & \text{, if } {\cal{Q}}_{si}[i, j] \ne 0\\
    r_{ij}^{pred} & \text{, otherwise}
 \end{cases}
\end{equation}

\emph{\ref{subsubsec:fs}.2}~~{\bf{\em{Matrix Factorization (MaF)}}}: Here, we employed the classical matrix factorization method \cite{bjorck2015numerical} to generate modified QoS invocation logs ${\cal{Q}}^{mf}_{ui}$ and ${\cal{Q}}^{mf}_{si}$ by filling up ${\cal{Q}}_{ui}$ and ${\cal{Q}}_{si}$, respectively.

{It} may be noted that after filling ${\cal{Q}}_{ui}$ and ${\cal{Q}}_{si}$, we now have 4 filled up matrices ${\cal{Q}}^{cf}_{ui}$, ${\cal{Q}}^{cf}_{si}$, ${\cal{Q}}^{mf}_{ui}$, and ${\cal{Q}}^{mf}_{si}$, which are used to predict the target QoS value using hierarchical neural regression.
In the next part, we discuss the hierarchical neural regression.

\subsubsection{Prediction using Hierarchical Neural Regression}\label{subsubsec:hnr}
\noindent
This is the final phase of our framework, where we employ hierarchical neural network-based regression to predict the QoS value (referred to Fig. \ref{fig:nnBlockDiagram}). 
The hierarchical neural regression module comprises of 2 key blocks: 
(a) Block$_{222}$: Level-1 neural regression block (NRL-1), and 
(b) Block$_{223}$: aggregator block (referred to Fig. \ref{fig:nnBlockDiagram}). In addition to these two blocks, the hierarchical neural regression module contains another block, called controller (Block$_{221}$), which mainly controls the aggregator block with the help of NRL-1. In NRL-1, 
we have 4 neural networks which are used to predict the target QoS values from  ${\cal{Q}}^{cf}_{ui}$, ${\cal{Q}}^{cf}_{si}$, ${\cal{Q}}^{mf}_{ui}$, and ${\cal{Q}}^{mf}_{si}$, respectively. 
The aggregator block, then, aggregates the 4 QoS values obtained from NRL-1 with intending to increase the QoS prediction accuracy. 
In the aggregator module, we have 2 blocks, 
a Level-2 neural regression (NRL-2) block and 
a mean absolute error (MAE)-based aggregation (MAE-Ag) block. 
We invoke one of these two blocks, which is decided by the controller block 
on the basis of certain criteria as discussed latter in this section. 
We now describe each of the blocks of hierarchical neural regression module in details. We begin with describing Block$_{222}$. 

\begin{figure}[!t]
    \centering
 	\includegraphics[width=\linewidth]{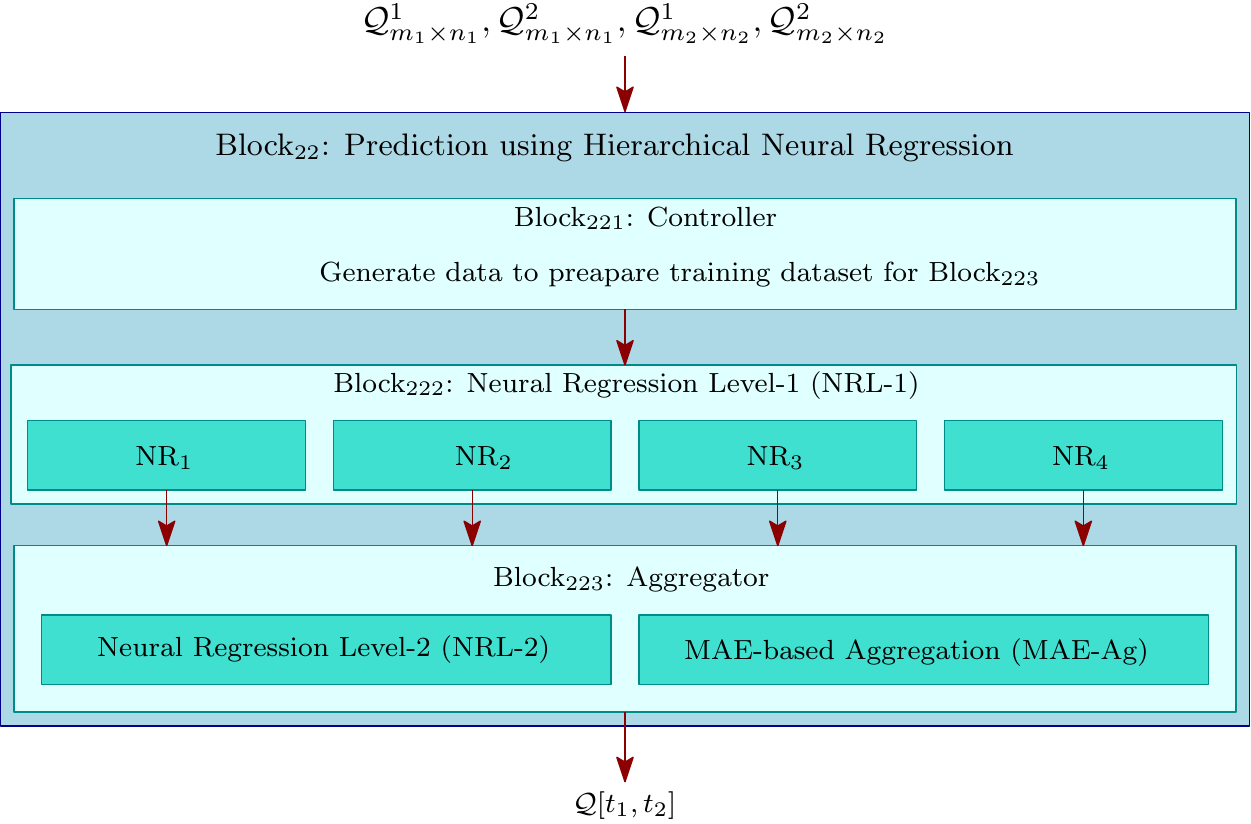}
 	\caption{Block diagram of hierarchical neural regression (inside view of Block$_{22}$ of Fig. \ref{fig:architecture})}
 	\label{fig:nnBlockDiagram}
\end{figure}

\emph{\ref{subsubsec:hnr}.1}~~{\textbf{\emph{Neural Regression Level-1 (NRL-1)}}}: This is one of the major components of the hierarchical neural regression (referred to Block$_{222}$ of Fig. \ref{fig:nnBlockDiagram}). The module consists of 4 neural regression blocks NR$_1$, NR$_2$, NR$_3$, and NR$_4$. The goal of these 4 blocks is to predict the target QoS value from 4 QoS invocation log matrices obtained from Block$_{21}$ of Fig. \ref{fig:architecture}.
Before going to the further details of NRL-1, we first describe the basic architecture of neural regression.

\textbullet~~ {{\bf{Neural Regression (NR)}}}: In this paper, we employ a neural network-based regression module to predict the target QoS value. At the top of the NR module, a linear regression \cite{lathuiliere2019comprehensive} layer is used to predict the missing QoS value. The objective of the linear regression is to generate a linear curve $h(X)$ as defined: $h(X) = \sum \limits_{i=0}^{n} \theta_i x_i$,
where $X = (x_1, x_2, \ldots, x_n)$ is a vector on the training dataset and $x_0 = 1$. The set $\theta = \{\theta_0, \theta_1, \ldots \theta_n\}$ is the set of parameters required to be optimized to predict the QoS value accurately on the test dataset. It may be noted that $\theta_0$ represents the bias.
To learn the value of each $\theta_i \in \theta$, a cost function $J(\theta)$ is used, which is defined as: $J(\theta) = \frac{1}{|TrD|} \sum \limits_{X_i \in TrD} \left(h(X_i) - y_i\right)^2$, where $y_i$ is the target value for the input vector $X_i$ and $TrD$ be the training dataset. 
To obtain the value of each $\theta_i \in \theta$, $J(\theta)$ needs to be minimized, which actually represents the mean squared error (MSE). Here, we employ a feed-forward neural network with backpropagation \cite{goodfellow2016deep} to obtain the values of $\theta$. To optimize the cost function of the neural network, we use stochastic gradient descent (SGD) with momentum \cite{goodfellow2016deep}. 


We now demonstrate the input-output of each of the 4 NR blocks below. 
Given a target user $u_{t_1}$, a target service $s_{t_2}$, a set of users ${\cal{U}}_f \in \{{\cal{U}}_{t_1}^{ui}, {\cal{U}}_{t_1}^{si}\}$, a set of services ${\cal{S}}_f \in \{{\cal{S}}_{t_2}^{ui}, {\cal{S}}_{t_2}^{si}\}$ and a QoS invocation log matrix ${\cal{Q}}_f \in \{{\cal{Q}}^{cf}_{ui}, {\cal{Q}}^{cf}_{si}, {\cal{Q}}^{mf}_{ui}, {\cal{Q}}^{mf}_{si}\}$, the training-testing setup of each NR block is discussed below.

\begin{itemize}
 \item \emph {Training Data:} For each user $u_i (\ne u_{t_1}) \in {\cal{U}}_f$, the QoS invocation record vector from ${\cal{Q}}_f$ containing the QoS values for each service in ${\cal{S}}_f$ is used as a training data. In other word, each row, except the one corresponding to $u_{t_1}$, of ${\cal{Q}}_f$ is used as a training data of the neural regression. 
 \item \emph {Input Vector:} For each training data, the QoS invocation record vector containing the QoS value of each service $s_i \in {\cal{S}}_f\setminus \{s_{t_2}\}$ is used as the input vector for neural regression. 
 \item \emph {The Target Value:} For each training data corresponding user $u_i (\ne u_{t_1})$, the QoS value of $s_{t_2}$, i.e., ${\cal{Q}}_f[i, t_2]$ is used as the target value of the neural regression.
 \item \emph {Test Input Vector:} For the user $u_{t_1}$, the QoS invocation record vector corresponding to each service $s_i \in {\cal{S}}_f\setminus \{s_{t_2}\}$ is used as the test input vector of the neural regression.
 \item \emph {Test Output:} Finally, the test output is the value of ${\cal{Q}}_f[t_1, t_2]$ to be predicted.
\end{itemize}

It may be noted that NR$_1$ and NR$_2$ work on ${\cal{Q}}^{cf}_{ui}$ and ${\cal{Q}}^{mf}_{ui}$, respectively corresponding to the user set ${\cal{U}}_{t_1}^{ui}$ and service set ${\cal{S}}_{t_2}^{ui}$, while NR$_3$ and NR$_4$ work on ${\cal{Q}}^{cf}_{si}$ and ${\cal{Q}}^{mf}_{si}$, respectively corresponding to the user set ${\cal{U}}_{t_1}^{si}$ and service set ${\cal{S}}_{t_2}^{si}$. Consider ${\cal{Q}}^{cf}_{ui}[t_1, t_2], {\cal{Q}}^{mf}_{ui}[t_1, t_2], {\cal{Q}}^{cf}_{si}[t_1, t_2],$ and ${\cal{Q}}^{mf}_{si}[t_1, t_2]$ are the 4 outputs predicted by the 4 NR modules of Block$_{222}$.

\emph{\ref{subsubsec:hnr}.2}~~{\textbf{\emph{Controller Module}}}: This is the first component (referred to Block$_{221}$ of Fig. \ref{fig:nnBlockDiagram}) of the hierarchical neural regression module. The objective of this module is to decide which component of the aggregator module to be executed and accordingly, either prepare the training dataset for NRL-2 or generate a dataset for MAE-Ag. The function of the controller is formally presented in Algorithm \ref{algo:controller}.

Given 4 QoS invocation log matrices ${\cal{Q}}^{cf}_{ui}, {\cal{Q}}^{cf}_{si}, {\cal{Q}}^{mf}_{ui},$ and ${\cal{Q}}^{mf}_{si}$ generated by Block$_{21}$, the controller module first check if it is feasible to generate the dataset to train NRL-2. The essential idea behind NRL-2 is to compare the predicted values generated by the 4 NR blocks of NRL-1 with the actual QoS value. The comparison is possible only if the 4 NR blocks of NRL-1 can predict some QoS values, which are already available to the hierarchical neural regression module. The objective of the controller module is to choose a few entries from ${\cal{Q}}_{NR} = \left({\cal{Q}}^{cf}_{ui} \cap {\cal{Q}}^{cf}_{si} \cap {\cal{Q}}^{mf}_{ui} \cap {\cal{Q}}^{mf}_{si}\right)$ to be predicted by the 4 NR blocks of NRL-1 and to be compared the predicted values with the actual QoS value by NRL-2. Therefore, to train the NRL-2, sufficient entries are required in ${\cal{Q}}_{NR}$. A given threshold $T_d$ is used to check whether ${\cal{Q}}_{NR}$ has sufficient entries to train NRL-2. If ${\cal{Q}}_{NR}$ has sufficient entries to train NRL-2, the controller chooses a random $T_d$ number of entries from ${\cal{Q}}_{NR}$ and execute 4 NR blocks of NRL-1 to obtain the predicted values. Once the predicted values from the NRL-1 are available, the controller module activates NRL-2 to obtain a more accurate predicted value. However, if ${\cal{Q}}_{NR}$ does not have sufficient entries to train the NRL-2, the controller activates the MAE-Ag. In this case, to compute the MAE value of each NR block of NRL-1, the controller chooses $T_d$ random entries from each of ${\cal{Q}}^{cf}_{ui}, {\cal{Q}}^{cf}_{si}, {\cal{Q}}^{mf}_{ui},$ and ${\cal{Q}}^{mf}_{si}$ and executes the 4 NR blocks to obtain their MAE values. Finally, the controller activates MAE-Ag. The details of both the blocks of Block$_{223}$ are discussed below. 

\begin{algorithm}[!t]\scriptsize
  \caption{Controller}
  \begin{algorithmic}[1]
    \State {\bf{Input:}} ${\cal{Q}}^{cf}_{ui}, {\cal{Q}}^{cf}_{si}, {\cal{Q}}^{mf}_{ui},$ and ${\cal{Q}}^{mf}_{si}$
    
    \If{$|{\cal{Q}}^{cf}_{ui} \cap {\cal{Q}}^{cf}_{si} \cap {\cal{Q}}^{mf}_{ui} \cap {\cal{Q}}^{mf}_{si}| \ge T_d$}
      \State $\varLambda \leftarrow NULL$;
      \State ${\cal{Q}}_{NR} \leftarrow$ Random $T_d$ entries from $\left({\cal{Q}}^{cf}_{ui} \cap {\cal{Q}}^{cf}_{si} \cap {\cal{Q}}^{mf}_{ui} \cap {\cal{Q}}^{mf}_{si}\right)$;
      \For {each $q_{ij} \in {\cal{Q}}_{NR}$}
        \State $\varphi_1 \leftarrow$ Predict the value of $q_{ij}$ using NR$_1$;
        \State $\varphi_2 \leftarrow$ Predict the value of $q_{ij}$ using NR$_2$;
        \State $\varphi_3 \leftarrow$ Predict the value of $q_{ij}$ using NR$_3$;
        \State $\varphi_4 \leftarrow$ Predict the value of $q_{ij}$ using NR$_4$;
        \State $\varLambda \leftarrow \varLambda \cup (\varphi_1, \varphi_2, \varphi_3, \varphi_4, q_{ij})$;
      \EndFor
      \State Execute NRL-2 with training data $\varLambda$;
    \Else
      \State $\varLambda_1  \leftarrow NULL; \varLambda_2  \leftarrow NULL; \varLambda_3  \leftarrow NULL; \varLambda_4 \leftarrow NULL$;
      \State ${\cal{Q}}^1_{Ag} \leftarrow$ Random $T_d$ entries from ${\cal{Q}}^{cf}_{ui}$;
      \For {each $q_{ij} \in {\cal{Q}}^1_{Ag}$}
        \State $\varphi_1 \leftarrow$ Predict the value of $q_{ij}$ using NR$_1$;
         $\varLambda_1 \leftarrow \varLambda_1 \cup (\varphi_1, q_{ij})$;
      \EndFor
      \State ${\cal{Q}}^2_{Ag} \leftarrow$ Random $T_d$ entries from ${\cal{Q}}^{mf}_{ui}$;
      \For {each $q_{ij} \in {\cal{Q}}^2_{Ag}$}
        \State $\varphi_2 \leftarrow$ Predict the value of $q_{ij}$ using NR$_2$;
        $\varLambda_2 \leftarrow \varLambda_2 \cup (\varphi_2,  q_{ij})$;
      \EndFor
      \State ${\cal{Q}}^3_{Ag} \leftarrow$ Random $T_d$ entries from ${\cal{Q}}^{mf}_{si}$;
      \For {each $q_{ij} \in {\cal{Q}}^3_{Ag}$}
        \State $\varphi_3 \leftarrow$ Predict the value of $q_{ij}$ using NR$_3$;
        $\varLambda_3 \leftarrow \varLambda_3 \cup (\varphi_3, q_{ij})$;
      \EndFor
      \State ${\cal{Q}}^4_{Ag} \leftarrow$ Random $T_d$ entries from ${\cal{Q}}^{cf}_{si}$;
      \For {each $q_{ij} \in {\cal{Q}}^4_{Ag}$}
        \State $\varphi_4 \leftarrow$ Predict the value of $q_{ij}$ using NR$_4$;
        $\varLambda_4 \leftarrow \varLambda_4 \cup (\varphi_4, q_{ij})$;
      \EndFor
      \State Execute MAE-Ag with $(\varLambda_1, \varLambda_2, \varLambda_3, \varLambda_4)$;
    \EndIf
  \end{algorithmic}
  \label{algo:controller}
\end{algorithm}

\emph{\ref{subsubsec:hnr}.3}~~{\textbf{\emph{Aggregator Module}}}: This is the second major component of the hierarchical neural regression module (referred to Block$_{223}$ of Fig. \ref{fig:nnBlockDiagram}). The objective of this module is to generate a single predicted output by combining the 4 predicted values ${\cal{Q}}^{cf}_{ui}[t_1, t_2], {\cal{Q}}^{cf}_{si}[t_1, t_2], {\cal{Q}}^{mf}_{ui}[t_1, t_2],$ and ${\cal{Q}}^{mf}_{si}[t_1, t_2]$ obtained from the previous level while minimizing the prediction error. The aggregator module comprises of two components: (a) a Level-2 neural regression module (NRL-2) and (b) a MAE-based aggregation module (MAE-Ag). We now explain each of these two modules in details.

\emph{\ref{subsubsec:hnr}.3.1}~~{\textbf{\emph{Neural Regression Level-2 (NRL-2)}}}: The objective of this module is to increase the prediction accuracy while aggregating the 4 predicted values obtained from NRL-1. 
Given a training dataset $\varLambda$, and the 4 predicted values ${\cal{Q}}^{cf}_{ui}[t_1, t_2], {\cal{Q}}^{cf}_{si}[t_1, t_2], {\cal{Q}}^{mf}_{ui}[t_1, t_2],$ and ${\cal{Q}}^{mf}_{si}[t_1, t_2]$ obtained from Block$_{222}$ as input, the training-testing setup of NRL-2 is discussed below:

\begin{itemize}
 \item {\emph{Training Data}}: Each tuple $\vartheta \in \varLambda$, generated by Algorithm \ref{algo:controller}, is used as a training data.
 \item {\emph{Input Vector}}: The input vector contains the first 4 elements of $\vartheta \in \varLambda$.
 \item {\emph{The Target Value}}: The last element of $\vartheta \in \varLambda$ is used as the target value.
 \item {\emph{Test Vector}}: The tuple $({\cal{Q}}^{cf}_{ui}[t_1, t_2], {\cal{Q}}^{cf}_{si}[t_1, t_2],$ ${\cal{Q}}^{mf}_{ui}[t_1, t_2],~ {\cal{Q}}^{mf}_{si}[t_1, t_2])$ is used as the test vector.
 \item {\em{Test Output}}: The test output is the value of ${\cal{Q}}_f[t_1, t_2]$, which is to be predicted by Level-2 neural regression.
\end{itemize}

\emph{\ref{subsubsec:hnr}.3.2}~~{\textbf{\emph{MAE-based Aggregation (MAE-Ag)}}}: The objective of this block is to compute the minimum MAE value among the 4 MAE values obtained from the 4 NR blocks of NRL-1 and accordingly chooses the NR block (say, MinNR block) generating the minimum MAE among the MAE values generated by all 4 NR blocks. Finally, MAE-Ag outputs the predicted value obtained by MinNR block, which is the final output of our framework. Algorithm \ref{algo:MAE-Ag} formally presents the MAE-Ag block.

\begin{algorithm}[!t]\scriptsize
  \caption{MAE\_BasedAggregation}
  \begin{algorithmic}[1]
    \State {\bf{Input:}} $(\varLambda_1, \varLambda_2, \varLambda_3, \varLambda_4)$
    \State {\bf{Output:}} The predicted value for ${\cal{Q}}_f[t_1, t_2]$
    
    \State Compute MAE$_1$, MAE$_2$, MAE$_3$ and MAE$_4$ from $\varLambda_1, \varLambda_2, \varLambda_3,$ and $\varLambda_4$ respectively;
    \State $MAE_{min} \leftarrow \min (MAE_1, MAE_2, MAE_3, MAE_4)$;
    \State $MinNR \leftarrow$ The NR block of NRL-1 generated $MAE_{min}$;
    \State Return the predicted value generated by $MinNR$;
  \end{algorithmic}
  \label{algo:MAE-Ag}
\end{algorithm}

In the following section, we provide rigorous experimental studies to justify the necessity of each block of our framework.

\section{Experimental Results}\label{sec:result}
\noindent
In this section, we present the experimental results with analysis. We implemented our proposed framework in MATLAB R2019b. 
All experiments were executed on the MATLAB Online\footnote{\url{https://matlab.mathworks.com}} server.

\subsection{DataSets}
\noindent
We used 2 datasets WS-DREAM-1 \cite{zheng2014investigating} and WS-DREAM-2 \cite{DBLP:conf/issre/ZhangZL11} to analyze the performance of our framework. 

(a) \emph{WS-DREAM-1}: The first dataset contains 5825 services and 339 users. The location of each user and service is also provided in the dataset. The dataset comprises 2 QoS parameters: response time (RT) and throughput (TP). For each QoS parameter, the QoS invocation log with dimension $339 \times 5825$ is also given. We used both the QoS parameters to validate our approach. 

(b) \emph{WS-DREAM-2}: The second dataset contains 4500 services, 142 users, and 64 time slices. However, this dataset does not contain any contextual information. Similar to WS-DREAM-1, this dataset also comprises 2 QoS parameters: response time and throughput. For each QoS parameter, the QoS invocation log is also provided. For each time slice, we extracted the user-service matrix with dimension $142 \times 4500$. We performed the experiment on 64 different matrices and recorded the average QoS prediction accuracy.

For experimental analysis, we divided each dataset into training, validation and testing sets. The size of the training dataset to be $x\%$ indicates $(100 - x)\%$ entries of our QoS invocation log were randomly made as 0. The remaining $(100 - x)\%$ entries are subdivided into validation and testing set into $1:2$ ratio.


We randomly chose 200 instances from the testing dataset for performance analysis. Each experiment was performed in 5 episodes for a given training dataset to find the prediction accuracy. Finally, the average result was calculated and reported in this paper.

\subsection{Comparison Metric} \noindent
We used the following two metrics to analyze our experimental results.

\begin{definition}{{\bf{[Mean Absolute Error (MAE)]}}:}
  MAE is the arithmetic mean of the absolute difference between the predicted QoS value and the actual QoS value of a service invoked by a user over the testing dataset.
  {\scriptsize{
  \begin{equation}
    MAE = \frac{1}{|TD|} \sum \limits_{q_{ij} \in TD}\left|q_{ij} - q^{pred}_{ij}\right| 
  \end{equation}}}
  \noindent
  where $q_{ij}$ is the actual QoS value, $q^{pred}_{ij}$ is the predicted QoS value, and $TD$ is the testing dataset.
\hfill$\blacksquare$ 
\end{definition}

\begin{definition}{{\bf{[Improvement $I(M_1, M_2)$]}}:}
  Given two MAE values obtained by two methods $M_1$ and $M_2$, the improvement of $M_1$ with respect to $M_2$ is defined as:
  {\scriptsize{
  \begin{equation}
    I(M_1, M_2) = \frac{MAE_2 - MAE_1}{MAE_2} \times 100 \%
  \end{equation}}}
  \noindent
  where $MAE_1$ and $MAE_2$ are the MAE values obtained by methods $M_1$ and $M_2$, respectively.
\hfill$\blacksquare$ 
\end{definition}

%

\subsection{Comparison Methods}
\noindent
We compare CAHPHF with a set of state-of-the-art methods followed by some intermediate methods to justify the performance of CAHPHF.  

\subsubsection{State-of-the-Art Methods}
\noindent
We compared the performance of CAHPHF with the following state-of-the-art approaches.

{{\bf{\em{(i)}}}~{\bf{UPCC \cite{breese1998empirical}}}: This method exercised a user-based collaborative filtering approach for QoS prediction, where only the set of users similar to the target users were generated before applying the prediction strategy.

{{\bf{\em{(ii)}}}~{\bf{IPCC \cite{sarwar2001item}}}: In this method, service-based collaborative filtering was applied for QoS prediction. The main idea in this work was to obtain a set of services similar to the target services first, followed by applying a prediction strategy.

{{\bf{\em{(iii)}}}~{\bf{WSRec \cite{zheng2011qos}}}: This method combined both the strategies as demonstrated in UPCC and IPCC.

{{\bf{\em{(iv)}}}~{\bf{NRCF \cite{sun2013personalized}}}: This method used normal recovery collaborative filtering to improve prediction accuracy.

{{\bf{\em{(v)}}}~{\bf{RACF \cite{wu2017collaborative}}}: This method also used collaborative filtering approach. Here, a ratio-based similarity metric was considered to compute the similarity, and the result was calculated by the similar users or similar services.

{{\bf{\em{(vi)}}}~{\bf{RECF \cite{DBLP:conf/icsoc/ZouJNWPG18}}}: A reinforced collaborative filtering approach was used in this work to predict the QoS value. This method integrated both user-based and service-based similarity information into a singleton collaborative filtering.

{{\bf{\em{(vii)}}}~{\bf{MF \cite{lo2012extended}}}: An extended matrix factorization-based approach was adopted in this work for prediction.

{{\bf{\em{(viii)}}}~{\bf{HDOP \cite{wang2016multi}}}: This method used multi-linear-algebra-based concepts of tensor for QoS value prediction. Tensor decomposition and reconstruction optimization algorithms were used to predict the QoS value.

{{\bf{\em{(ix)}}}~{\bf{TA \cite{DBLP:conf/icws/AminCG12}}}: This approach integrated time series ARIMA and GARCH models to predict the QoS value.

{{\bf{\em{(x)}}}~{\bf{NMF \cite{lee1999learning}}}: This approach proposed a non-negative matrix factorization method.

{{\bf{\em{(xi)}}}~{\bf{PMF \cite{mnih2008probabilistic}}}: This paper proposed a probabilistic matrix factorization method.

{{\bf{\em{(xii)}}}~{\bf{NIMF \cite{DBLP:journals/tsc/ZhengMLK13}}}: This approach proposed a user collaboration followed by a  neighborhood-integrated matrix factorization for personalized QoS value prediction.

{{\bf{\em{(xiii)}}}~{\bf{CNR \cite{DBLP:conf/icsoc/ChattopadhyayB19}}}: This method considered both user-intensive and service-intensive filtering in the filtering stage, and finally aggregated the output using the intersection between the set of users and services obtained from two different filtering methods. For prediction, this approach used neural regression to estimate the QoS value.

\subsubsection{Intermediate Methods}\label{subsubsec:inter}
\noindent
We compared the performance of CAHPHF with the following intermediate methods, which is actually the ablation study of our system.

{{\bf{\em{(i)}}}~{\bf{User-intensive Matrix Factorization (UMF)}}: In this approach, context-sensitive user-intensive filtering is performed first. Finally, matrix factorization is used on the filtered QoS invocation log matrix to predict the target QoS value.

{{\bf{\em{(ii)}}}~{\bf{Service-intensive Matrix Factorization (SMF)}}: This approach is similar to UMF. The only difference is here, instead of applying the user-intensive filtering, the service-intensive filtering is adopted. In SMF, the context-sensitive service-intensive filtering is performed first, followed by a matrix factorization method to predict the QoS value.

{{\bf{\em{(iii)}}}~{\bf{User-intensive Collaborative Filtering (UCF)}}: Here, the context-sensitive user-intensive filtering is performed first. Finally, collaborative filtering is used on the filtered QoS invocation log matrix to predict the target QoS value.

{{\bf{\em{(iv)}}}~{\bf{Service-intensive Collaborative Filtering (SCF)}}: Here, the context-sensitive service-intensive filtering is performed first. Collaborative filtering is then employed to predict the desired QoS value.

{{\bf{\em{(v)}}}~{\bf{User-intensive Neural Regression (UNR)}}: Similar to the UMF and UCF, in this approach, the context-sensitive user-intensive filtering is performed at the initial phase. However, unlike other methods, here, neural network-based regression is used on the filtered QoS invocation log matrix to predict the target QoS value.

{{\bf{\em{(vi)}}}~{\bf{Service-intensive Neural Regression (SNR)}}: This approach is similar to UNR. The only difference is, in the filtering stage of this approach, the context-sensitive service-intensive filtering is performed.

{{\bf{\em{(vii)}}}~{\bf{User-intensive Matrix factorization with Neural Regression (UMNR)}}: This approach is similar to UNR. However, in this approach, before applying neural network-based regression to predict the QoS value, matrix factorization is used to fill up the sparse matrix.

{{\bf{\em{(viii)}}}~{\bf{Service-intensive Matrix factorization with Neural Regression (SMNR)}}: This approach is similar to SNR. However, like UMNR, in this approach, before applying neural network-based regression to predict the QoS value, matrix factorization is used to fill up the sparse matrix.

{{\bf{\em{(ix)}}}~{\bf{User-intensive Collaborative Neural Regression (UCNR)}}: Unlike UMNR, here, before applying neural network-based regression to predict the QoS value, collaborative filtering is used to fill up the sparse matrix.

{{\bf{\em{(x)}}}~{\bf{Service-intensive Collaborative Neural Regression (SCNR)}}: Unlike SMNR, here, before applying neural network-based regression to predict the QoS value, collaborative filtering is used to fill up the sparse matrix like UCNR.

{{\bf{\em{(xi)}}}~{\bf{CAHPHF Without NRL-1 (CAHPHFWoNN)}}: This method is identical to the CAHPHF method except for one segment. In CAHPHFWoNN, instead of using NRL-1, collaborative filtering and matrix factorization are used to generate the training and testing dataset for NRL-2.

{{\bf{\em{(xii)}}}~{\bf{CAHPHF with minimum MAE (CAHPHF-MAE)}}: This method is also similar to the CAHPHF method except for one portion. In CAHPHF-MAE, instead of using NRL-2, here, MAE-based aggregation method is used in $Block_{223}$ of Fig. \ref{fig:nnBlockDiagram} to reduce error.

{{\bf{\em{(xiii)}}}~{\bf{User-intensive Collaborative Neural Regression Without Contextual Filtering (UCNRWoCF)}}: This method is similar to the UCNR without contextual filtering.

{{\bf{\em{(xiv)}}}~{\bf{Service-intensive Collaborative Neural Regression Without Contextual Filtering (SCNRWoCF)}}: This method is similar to the SCNR without contextual filtering.

{{\bf{\em{(xv)}}}~{\bf{User-intensive Matrix factorization with Neural Regression Without Contextual Filtering (UMNRWoCF)}}: This method is similar to the UMNR without contextual filtering.

{{\bf{\em{(xvi)}}}~{\bf{Service-intensive Matrix factorization with Neural Regression Without Contextual Filtering (SMNRWoCF)}}: This method is similar to the SMNR without contextual filtering.

{{\bf{\em{(xvii)}}}~{\bf{CAHPHF Without Contextual Filtering (CAHPHFWoCF)}}: This method is similar to the CAHPHF without contextual filtering.

\begin{table*}\makegapedcells
\scriptsize
\caption{Performance of CAHPHF and comparison with various state-of-the-art methods over WS-DREAM-1 \cite{zheng2014investigating}}
\centering
 \begin{tabular}{c|c|c|c|c|c|c|c|c|c|c|c}
 \hline
 \multirow{2}{*}{QoS} &  \multirow{2}{*}{$|TrD|$}  &    \multicolumn{9}{c|}{MAE} & \multirow{2}{*}{$I$(CAHPHF, CNR)}\\
 \cline{3-11}
  &          & {\bf{UPCC}} & {\bf{IPCC}} & {\bf{WSRec}} & {\bf{MF}} & {\bf{NRCF}} & {\bf{RACF}} & {\bf{RECF}} & {\bf{CNR}} & {\bf{CAHPHF}} & \\
 \hline
 \hline
 \multirow{3}{*}{RT} & 10\% & 0.6063 & 0.7 & 0.6394 & 0.5103 & 0.5312 & 0.4937 & 0.4332 & 0.2597 & {\bf{0.059}} & 77.28\%\\
 \cline{2-12}
 & 20\% & 0.5379 & 0.5351 & 0.5024 & 0.4981 & 0.4607 & 0.4208 & 0.3946 & 0.1711 & {\bf{0.0419}} & 75.51\%\\
 \cline{2-12}
 & 30\% & 0.5084 & 0.4783 & 0.4571 & 0.4632 & 0.4296 & 0.3997 & 0.3789 & 0.0968 & {\bf{0.0399}} & 58.78\%\\
 \hlineB{2.5}
  QoS & $|TrD|$ & {\bf{UPCC}} & {\bf{IPCC}} & {\bf{WSRec}} & {\bf{NMF}} & {\bf{PMF}} & \multicolumn{2}{c}{{\bf{NIMF}}} & \multicolumn{2}{|c|}{{\bf{CAHPHF}}} & $I$(CAHPHF, NIMF)\\
 \hline
 \hline
 \multirow{3}{*}{TP}& 5\% & 26.123 & 29.265 & 25.8755 & 25.752 & 19.9034 & \multicolumn{2}{c}{17.9297} & \multicolumn{2}{|c|}{{\bf{7.907}}} & 55.90\%\\
 \cline{2-12}
 & 10\% & 21.2695 & 27.3993 & 19.9754 & 17.8411 & 16.1755 & \multicolumn{2}{c}{16.0542} & \multicolumn{2}{|c|}{{\bf{5.98}}} & 62.75\%\\
 \cline{2-12}
 & 20\% & 17.5546 & 25.0273 & 16.0762 & 15.2516 & 14.6694 & \multicolumn{2}{c}{13.7099} & \multicolumn{2}{|c|}{{\bf{4.189}}} & 69.45\%\\
 \hline
 \multicolumn{12}{r}{$|TrD|$: Size of Training Data} \\
\end{tabular}\label{tab:compareDS1}
\end{table*}

\subsection{Configuration of CAHPHF}
\label{subsec:config}
\noindent
In this subsection, we discuss the configuration used for CAHPHF method. We set the tunable parameters empirically from our validation dataset.

\textbullet~~ {\bf{Configuration of Hybrid Filtering}}: 
Here, a set of threshold parameters, more precisely 12 threshold parameters, is required. All these threshold parameters are data-driven.

The hybrid filtering comprises of 3 different filtering techniques.
For \emph{context-aware filtering}, the threshold values were chosen as:
\begin{equation}\label{eq:tr_ca_u}\small
 T^u_c = median \left(HD(\alpha_t, \alpha_i)\right), \text{over all users } u_i \in {\cal{U}}
\end{equation}
\begin{equation}\label{eq:tr_ca_s}\small
 T^s_c = median(HD(\beta_t, \beta_i)), \text{over all services } s_i \in {\cal{S}}
\end{equation}
\noindent
where $\alpha_t$, $\beta_t$ are the contextual information of $u_t$ and $s_t$ respectively.

For \emph{similarity-based filtering}, the threshold parameters were chosen as follows.
\begin{equation}\label{eq:tr_s_u}\small
 T^u_s = \max \left(0.5 \times \max \limits_{\substack{u_i \in {\cal{U}}\\ u_i \ne u_t}} \left(CSM(u_t, u_i)\right), median \left(CSM(u_t, u_i)\right)\right)
\end{equation}
\begin{equation}\label{eq:tr_s_s}\small
 T^s_s = \max \left(0.5 \times \max \limits_{\substack{s_i \in {\cal{S}}\\ s_i \ne s_t}} \left(CSM(s_t, s_i)\right), median \left(CSM(s_t, s_i)\right)\right)
\end{equation}
\noindent
where the median was computed across ${\cal{U}}$ for user filtering and  ${\cal{S}}$ for service filtering. 
As a matter of fact, the median of the similarity values can be 0, since the cosine similarity value varies from 0 to 1. Therefore, Equations \ref{eq:tr_s_u} and \ref{eq:tr_s_s} are different from Equations \ref{eq:tr_ca_u} and \ref{eq:tr_ca_s}.

For \emph{context-sensitivity-based filtering}, we used half of the cardinality value of the set of users or services obtained after similarity-based filtering.
\begin{equation}\label{eq:tr_cs_u}\small
 T^u_{cs} = 0.5 \times |{\cal{U}}_c|
\end{equation}
\begin{equation}\label{eq:tr_cs_s}\small
 T^s_{cs} = 0.5 \times |{\cal{S}}_c|
\end{equation}

\textbullet~~ {\bf{Configuration of NRL-1}}: 
Here, 4 neural networks are used. In our experiment, each neural network consisted of 2 hidden layers comprising of 256 and 128 neurons respectively. We used the following hyper-parameters for each neural network. The learning rate was set to 0.01 with momentum 0.9. The training was performed up to 50 epochs or up to a minimum gradient of $10^{-5}$.

\textbullet~~ {\bf{Configuration of NRL-2}}: 
Only 1 neural network is used in NRL-2 for which we used the following configuration. The network comprised of only 1 hidden layer with 2 neurons. Among the hyper-parameters, the learning rate was set to 0.01 with momentum 0.9. The training was performed up to 1000 epochs or up to a minimum gradient of $10^{-5}$. The cardinality of the training dataset to train NRL-2 was taken as 200.

We provide a rigorous analysis of parameter tuning latter in this section.

\subsection{Experimental Analysis}
\noindent
In Table \ref{tab:compareDS1}, we present the performance of our CAHPHF in terms of MAE considering the response time (RT) and throughput (TP) individually as the QoS parameter over WS-DREAM-1.
Along with the performance study of CAHPHF, Table \ref{tab:compareDS1} shows a comparative study with the major state-of-the-art approaches.


In WS-DREAM-2, the contextual information of the users and services are not present. 
Therefore, instead of CAHPHF, Table \ref{tab:compareDS2} shows the comparison between CAHPHFWoCF and the major state-of-the-art approaches considering RT and TP individually over WS-DREAM-2.

\begin{table}[!t]\makegapedcells
\scriptsize
\caption{Comparison between CAHPHFWoCF with various state-of-the-art methods over WS-DREAM-2 \cite{DBLP:conf/issre/ZhangZL11}}
\centering
 \begin{tabular}{c|c|c|c|c|c|c}
 \hline
 \multirow{2}{*}{QoS} & \multirow{2}{*}{$|TrD|$}  &  \multicolumn{4}{c|}{MAE} & \multirow{2}{*}{$I$}\\
  \cline{3-6}
  &  & {{\bf{MF}}} & {{\bf{TA}}} & {{\bf{HDOP}}} & {{\bf{CAHPHFWoCF}}} & \\
 \hline
 \hline
 \multirow{3}{*}{RT}& 10\% & {0.4987} & {0.6013} & {0.3076} & {\bf{0.1187}} & 61.41\%\\
 \cline{2-7}
 & 20\% & {0.4495} & {0.5994} & {0.2276} & {\bf{0.0758}} & 66.70\%\\
 \cline{2-7}
 & 50\% & {0.4013} & {0.4877} & {0.1237} & {\bf{0.0326}} & 73.65\%\\
 \hlineB{2.5}
 
 \multirow{3}{*}{TP} & 10\% & {16.3214} & {17.2365} & {13.2578}  & {\bf{4.897}} & 63.06\%\\
 \cline{2-7}
 & 20\% & {14.1478} & {15.0994} & {10.1276}  & {\bf{4.101}} & 59.51\%\\
 \cline{2-7}
 & 50\% & {14.9013} & {14.9870} & {10.0037} & {\bf{3.561}} & 64.40\%\\
 \hline
 \multicolumn{7}{r}{$|TrD|$: Size of Training Data, $I$: $I$(CAHPHFWoCF, HDOP)}
\end{tabular}\label{tab:compareDS2}
\end{table}

From the experimental results, we have the following observations:

\begin{enumerate}[label=(\roman*)]
 \item 
 As evident from Tables \ref{tab:compareDS1} and \ref{tab:compareDS2}, our methods performed the best as compared to the major state-of-the-art approaches for both the QoS parameters RT and TP. The last columns of the above tables show the improvement of our method over the second-best method of Tables \ref{tab:compareDS1} and \ref{tab:compareDS2}.
 On average our framework achieves 65.7\% improvement as compared to the second-best state-of-the-art method of Tables \ref{tab:compareDS1} and \ref{tab:compareDS2}.
 \item It is observed from Fig.s \ref{fig:soa} (a) and (b) that MAE value decreases and thus, the prediction accuracy improves with the increase in the size of the training dataset.
\end{enumerate}

\begin{figure}[!b]
\centering
\includegraphics[width=0.495\linewidth]{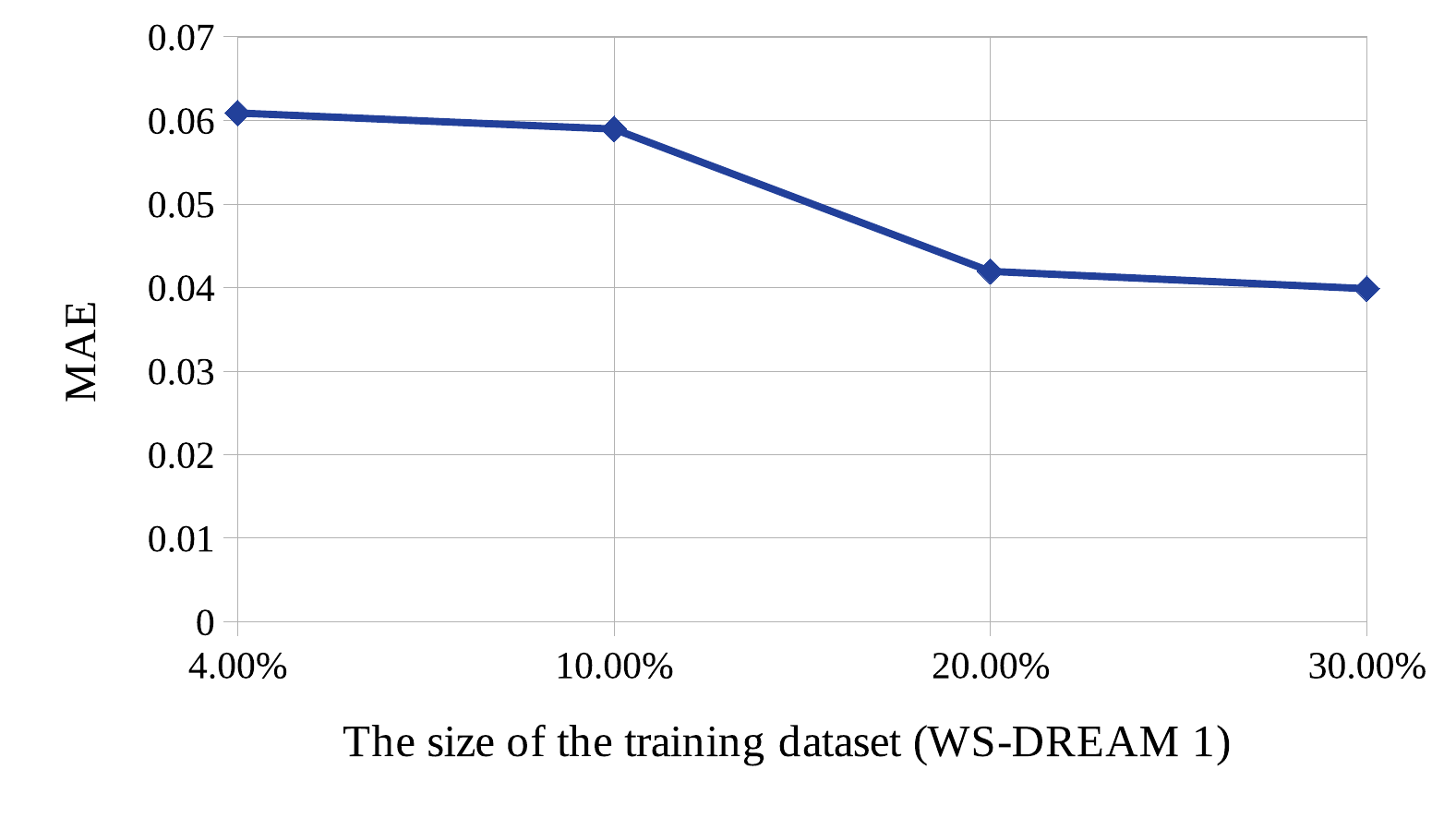}
\includegraphics[width=0.495\linewidth]{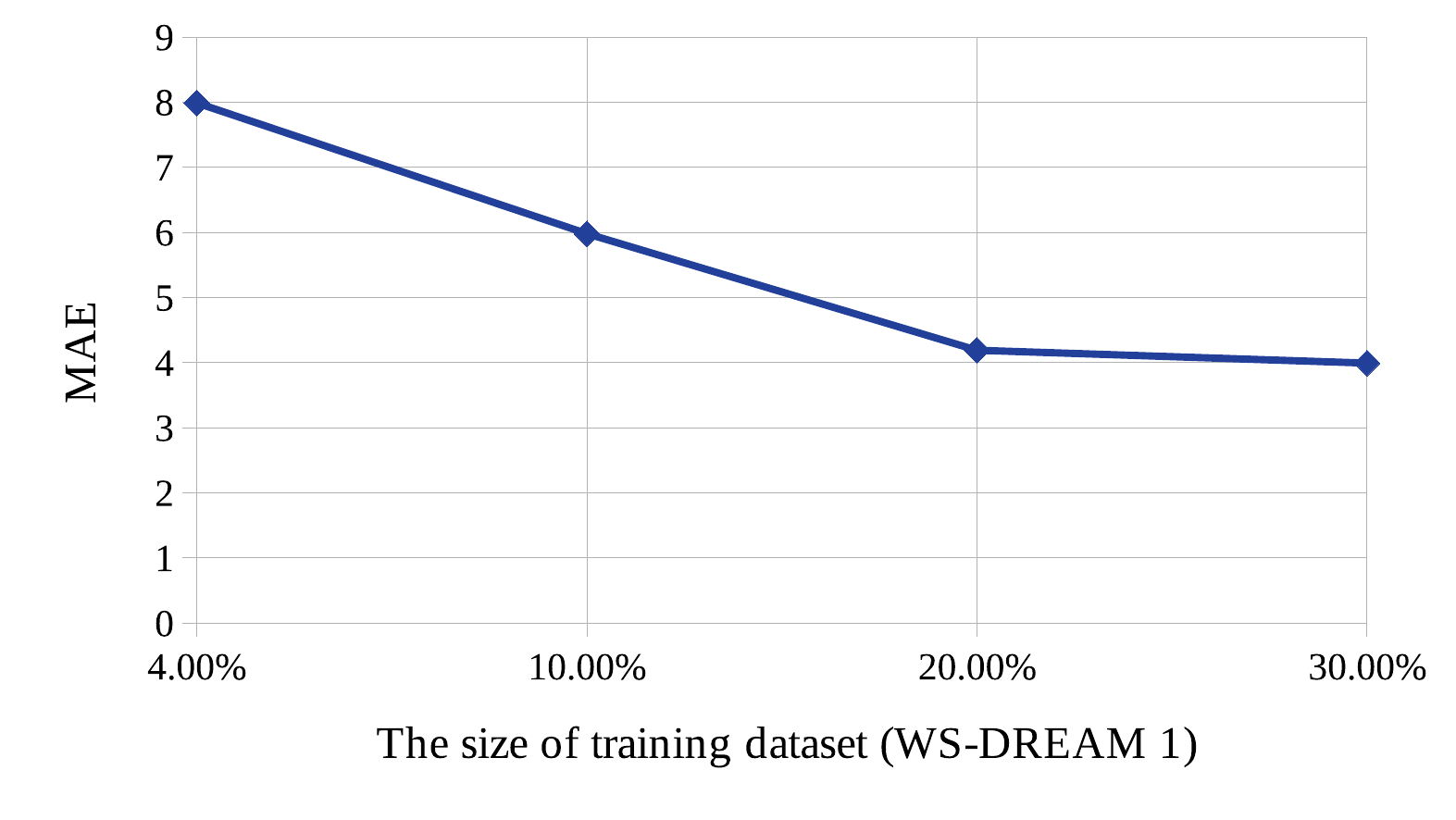}\\
(a) ~~~~~~~~~~~~~~~~~~~~~~~~~~~~~~~~~~~~~~~~~~~(b)  
\caption{Change in MAE with respect to the size of training dataset over (a) RT, (b) TP}
\label{fig:soa}
\end{figure}

\noindent
From now on we show the experimental results on response time of WS-DREAM-1, since for the rest of the datasets, similar trends are observed. 

Fig. \ref{fig:inter} shows the comparison between
CAHPHF and the intermediate methods (referred to Section \ref{subsubsec:inter}). As evident from Fig. \ref{fig:inter}, CAHPHF produced the best results compared to all the intermediate methods. The significance of each block of CAHPHF is discussed below.

\begin{enumerate}[label=(\roman*)]
 \item UNR and SNR are better than UMF, SMF, UCF, and SCF, which signifies the importance of the neural regression used in our framework.
 \item UMNR, SMNR, UCNR, and SCNR are better than UNR and SNR, which clearly establishes the requirement of our hierarchical prediction strategy, more specifically the need for filling up the sparse matrix. 
 \item CAHPHF is better than UMNR, SMNR, UCNR, and SCNR, which signifies the requirement of the hierarchical neural regression (more specifically, Block$_{223}$) module and as well as hybrid filtering.
 \item CAHPHF is also better than CAHPHFWoNN, which also signifies the requirement of the hierarchical neural regression (more specifically, Block$_{222}$) module.
 \item Finally, CAHPHF is better than CAHPHF-MAE, which clearly shows the importance of the Level-2 neural regression module.
\end{enumerate}

\begin{figure}[!t]
	\centering
	\includegraphics[width=\linewidth]{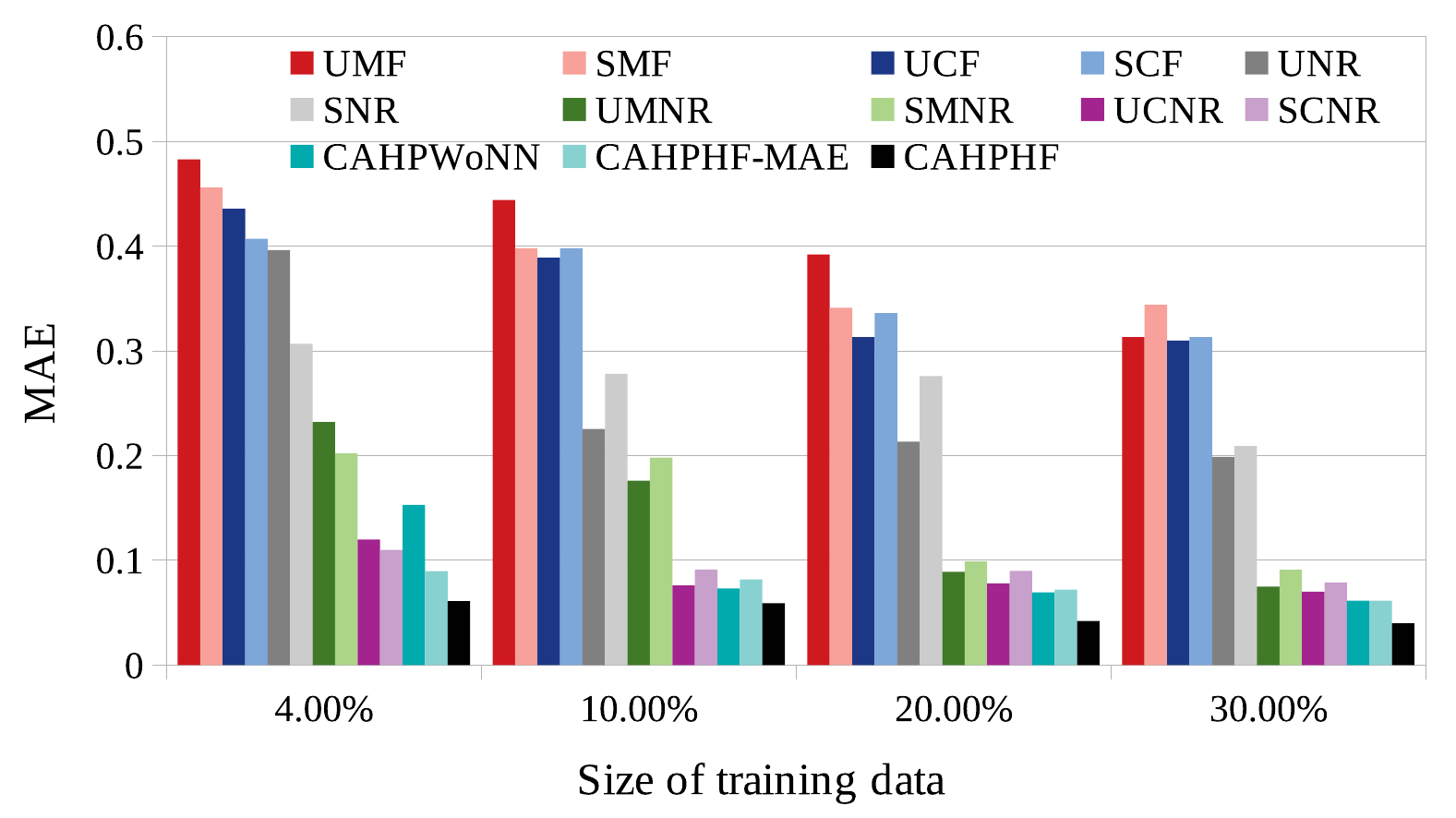}
	\caption{Comparison with intermediate methods}
	\label{fig:inter}
\end{figure} 

\begin{figure}[!hb]
    \centering
 	\includegraphics[width=\linewidth]{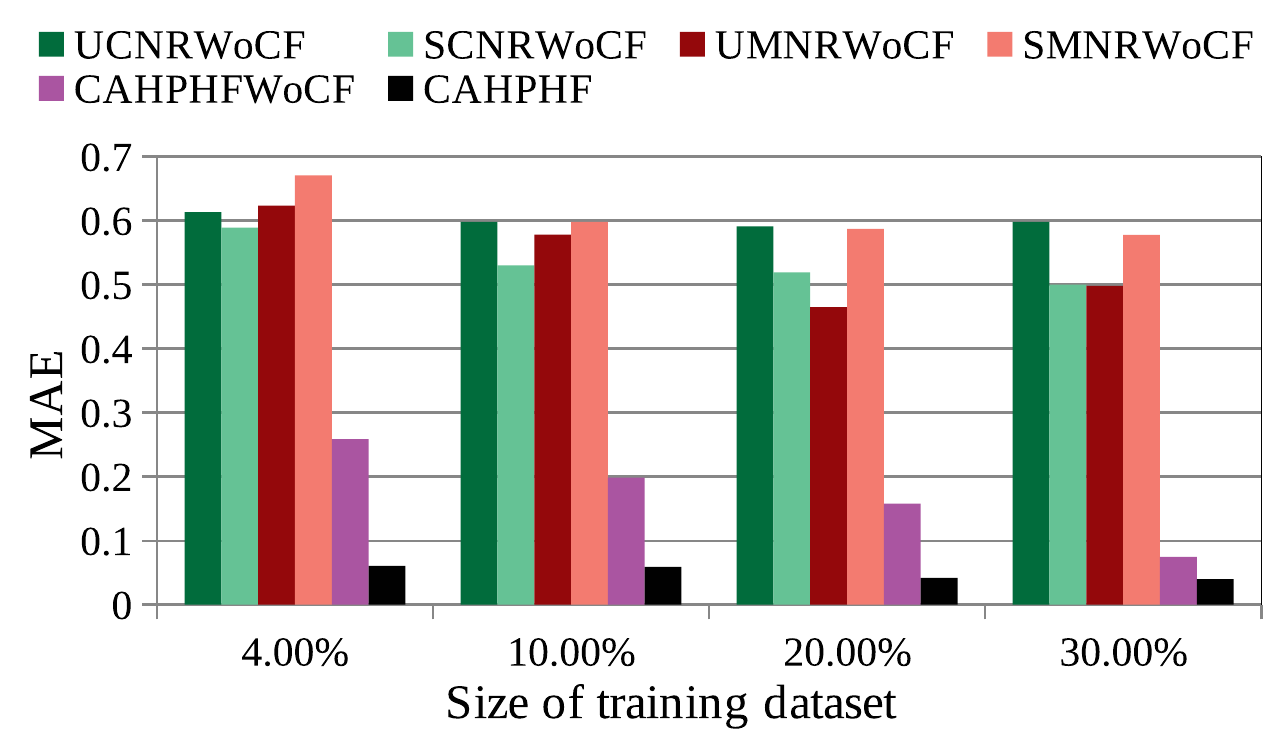}
 	\caption{Context-sensitivity analysis}
 	\label{fig:context}
\end{figure} 

\subsubsection{Context-Sensitivity Analysis}
\noindent
Fig. \ref{fig:context} shows the significance of context-sensitive filtering used in our approach. We compared our proposed method with the intermediate methods without context sensitive filtering (referred to Section \ref{subsubsec:inter}). As evident from Fig. \ref{fig:context}, CAHPHF outperformed the rest of the methods. Here, we have the following observations:

\begin{enumerate}[label=(\roman*)]
 \item CAHPHFWoCF is better than SCNRWoCF, SMNRWoCF, UCNRWoCF, and UMNRWoCF, which implies the significance of our hybrid filtering module.
 \item As CAHPHF is better than CAHPHFWoCF, it implies that the context sensitivity analysis is quite impactful in our framework.
\end{enumerate}

\subsubsection{Impact of Tunable Parameters}\label{subsubsec:itp}
\noindent
In this subsection, we show the impact of tunable parameters on the prediction accuracy of CAHPHF. In this analysis, for each experiment, we varied one parameter at a time, while keeping the rest of the parameters as constant, same as discussed in Section \ref{subsec:config}. 
We begin with discussing the impact of the threshold parameter required in hybrid filtering module on the prediction accuracy.

{\emph{\ref{subsubsec:itp}.1}~ {\textbf{\emph{Impact of Threshold Parameter:}}}
Before analyzing the impact of the threshold parameters on the prediction accuracy, we first discuss the tuning of these parameters. 
In this experiment, we chose a variable $k~(0 \le k \le 1)$ to tune the threshold parameters, as explained below. 

{{\bf{\em{(i)}}}~ In context aware filtering, the set of  users/services to be filtered was sorted in descending order based on their contextual distances from the target user/service. 
From the sorted list, the contextual distance (say, $\lambda^H_k$) between the $(\ceil{|\Gamma| \times k})^{th}$ user/service and the target user/service was chosen as threshold (i.e., $T_c^u$/$T_c^s$), where $\Gamma$ be the set of users/services to be filtered. It may be noted, by varying $k$, we obtained different values of $T_c^u$/$T_c^s$.
 
{{\bf{\em{(ii)}}}~ In similarity-based filtering, the set of users/services to be filtered was sorted in ascending order based on their cosine similarity values with the target user/service. From the sorted list, the cosine similarity value (say, $\lambda^C_k$) of the $(\ceil{|\Gamma| \times k})^{th}$ user/service with the target user/service was chosen as the threshold (i.e., $T_s^u$/$T_s^s$), where $\Gamma$ is the set of users/services to be filtered. 
Finally, $\max(\lambda^C_k, \lambda^C_{max} \times k)$ was chosen as the value of $T_s^u$/$T_s^s$, where $\lambda^C_{max}$ is the maximum cosine similarity value across all users/services with the target user/service.
 
{{\bf{\em{(iii)}}}~ For context sensitive filtering, the threshold $T_{cs}^u$/$T_{cs}^s$ was chosen as $k^{th}$ fraction of the cardinality of the set of users/services obtained after similarity-based filtering.

It may be noted that $\lambda^H_{0.5}, \lambda^C_{0.5}$ represent the median values, and  
the value of $k$ was 0.5 in Equations \ref{eq:tr_cs_u}, \ref{eq:tr_cs_s}. 

\begin{figure}[!b]
    \centering
 	\includegraphics[width=0.85\linewidth]{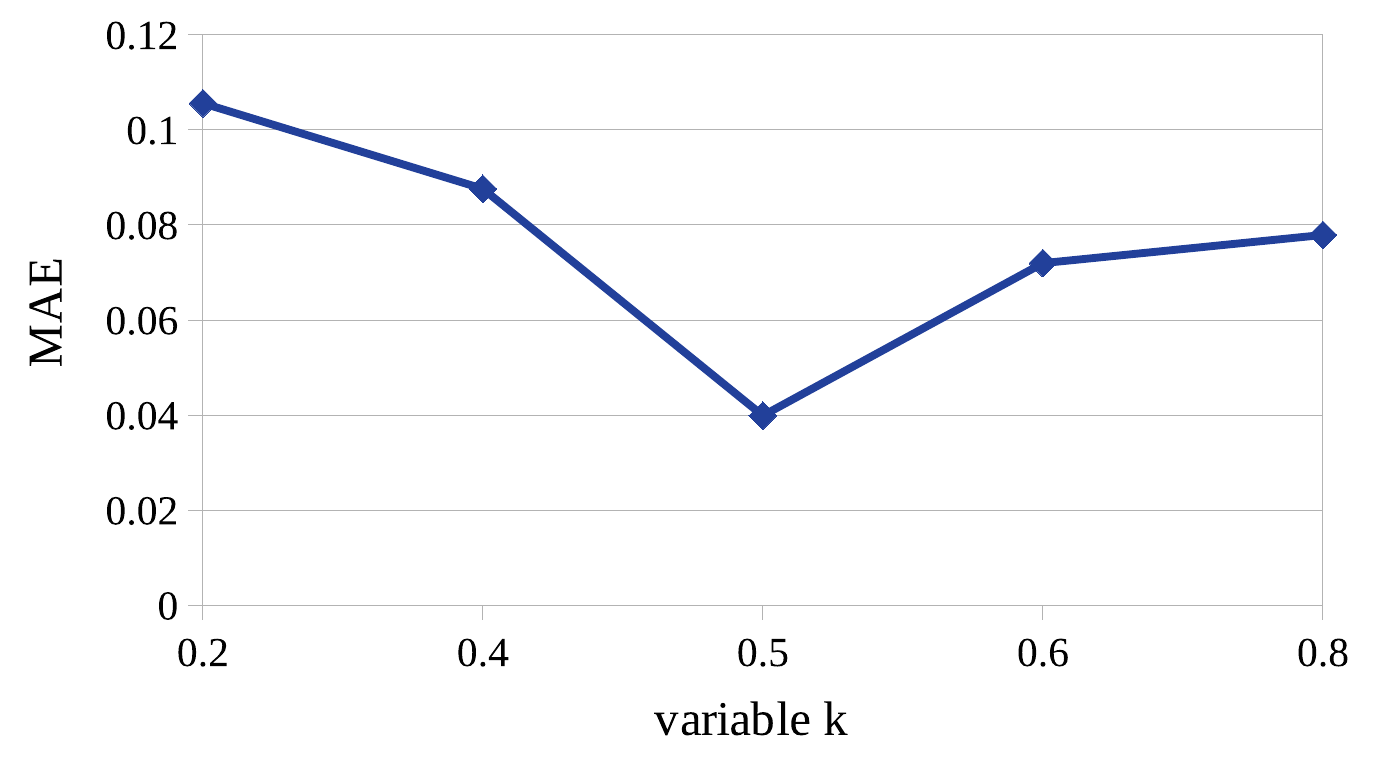}
   \caption{Change in MAE with respect to variable $k$, when the size of training dataset = 30\%}
 	\label{fig:tp_threshold}
\end{figure}

\noindent
Fig. \ref{fig:tp_threshold} shows the change in MAE with respect to $k$. 
As evident from Fig. \ref{fig:tp_threshold}, for $k = 0.5$, the least MAE value was obtained, which we used in our experiment. 
It may be noted, as we increased the value of $k$ beyond 0.5, the MAE value reduced, because of the following fact. 
Due to a high value of $k$, 
the values of the threshold parameters were also very high. 
Therefore, the filtered set of users/services contained only a few numbers of users/services, which was insufficient to train the neural network in the latter part of our framework. 
On the other hand, 
for lower values of the threshold parameters, 
the filtered set of users/services contained a large number of users/services, which in turn impeded the objective of the filtering. 
Consequently, MAE reduced with the decrease in the value of $k$.

{\emph{\ref{subsubsec:itp}.2}~ {\textbf{\emph{Impact of training data size of NRL-2:}}}
It may be observed, the size of the dataset to train NRL-2 (say, $TrD_{\text{NRL-2}}$) has an impact on the prediction accuracy. 
While Fig. \ref{fig:sca_level2_nn_data}(a) shows the change in MAE  with increase in $TrD_{\text{NRL-2}}$, Fig. \ref{fig:sca_level2_nn_data}(b) shows the computation time required by CAHPHF with the change in $TrD_{\text{NRL-2}}$. 
It may be noted further, the $x$-axis of Fig.s  \ref{fig:sca_level2_nn_data}(a), (b) represent 
the training data size of CAHPHF, whereas,
the legends of the figures 
represent the size of $TrD_{\text{NRL-2}}$. 
%
As observed from Fig.s \ref{fig:sca_level2_nn_data}(a), (b), the MAE value decreased with the increase in $TrD_{\text{NRL-2}}$ while compromising the computation time. Clearly, we have a trade-off between computation time and prediction accuracy. 
Furthermore, we observe from Fig.s  \ref{fig:sca_level2_nn_data}(a), (b) that 
even our worst performance  (i.e., the MAE value obtained by CAHPHF when $\left| TrD_{\text{NRL-2}}\right| = 50$) was better than the state-of-the-art approaches of Table \ref{tab:compareDS1}.
Therefore, as per the permitted time limit, the size of $TrD_{\text{NRL-2}}$ is to be decided. 

\begin{figure}[!b]
    \centering
 	(a)\includegraphics[width=0.9\linewidth]{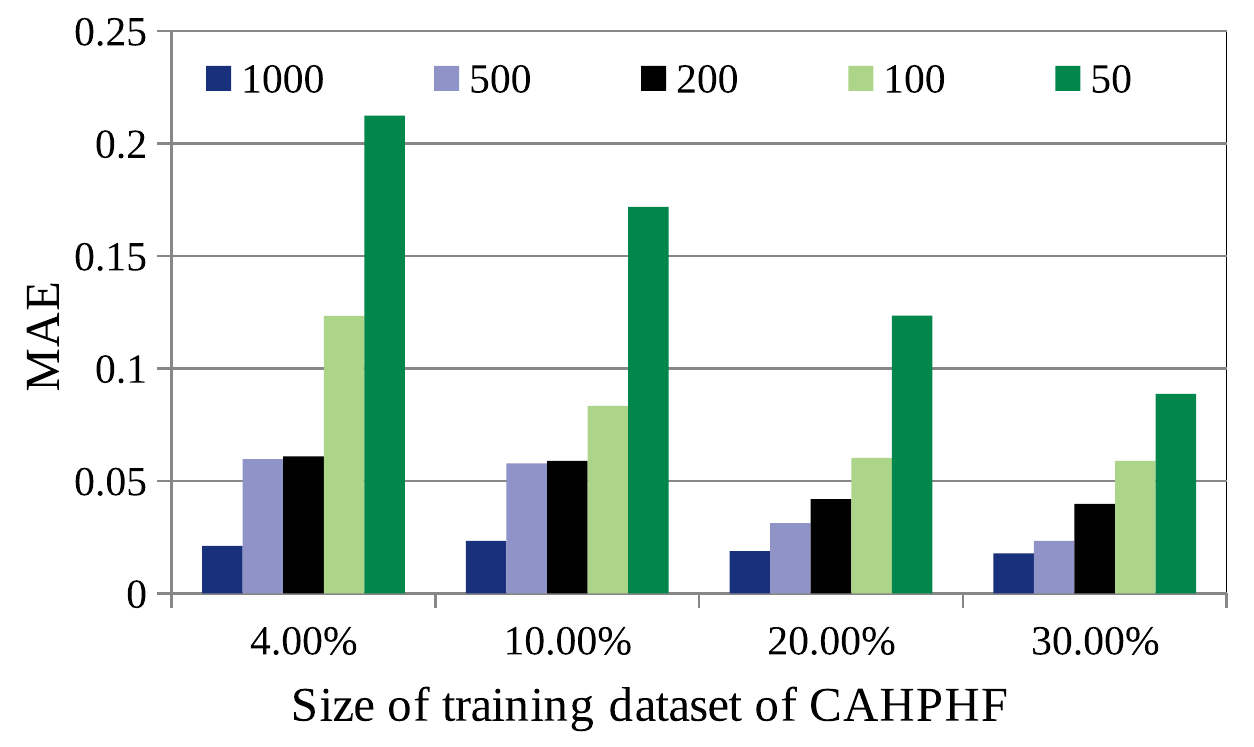}
 	(b)\includegraphics[width=0.9\linewidth]{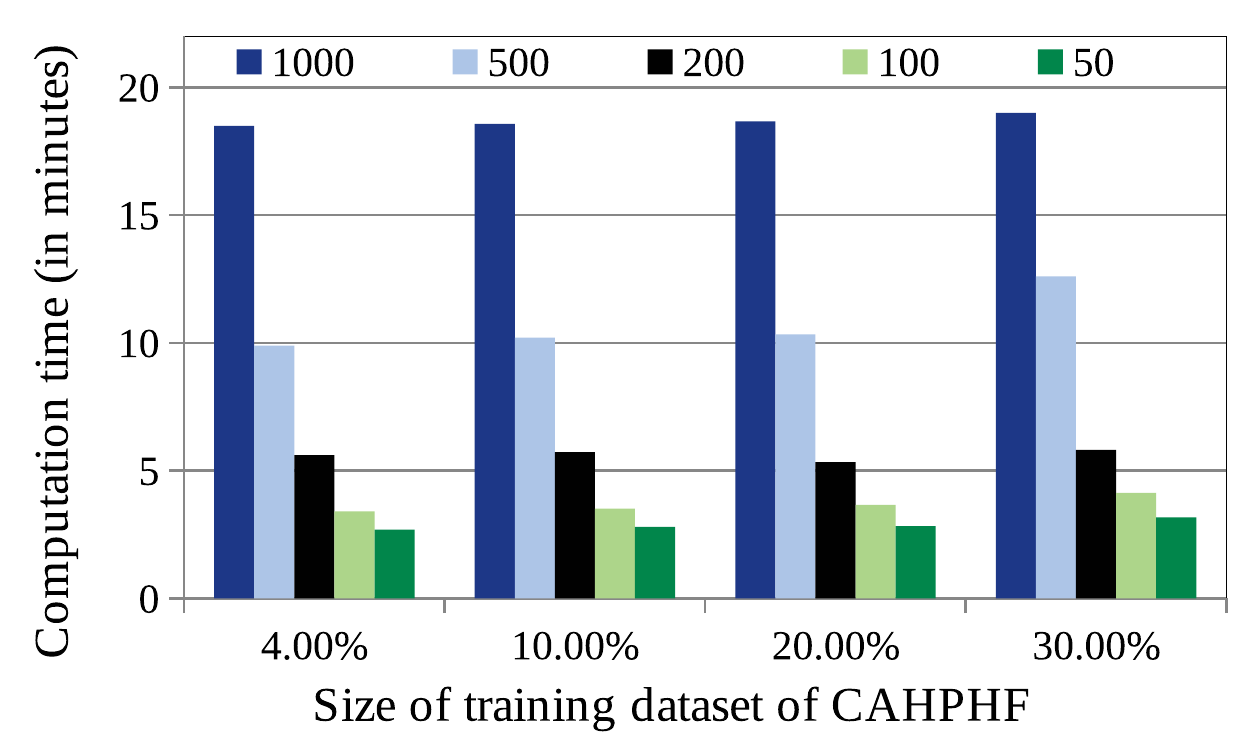}
\caption{Impact of training data size of NRL-2 on (a) MAE, (b) computation time}
 	\label{fig:sca_level2_nn_data}
\end{figure}

\begin{figure}[!t]
    \centering
 	(a)\includegraphics[width=0.8\linewidth]{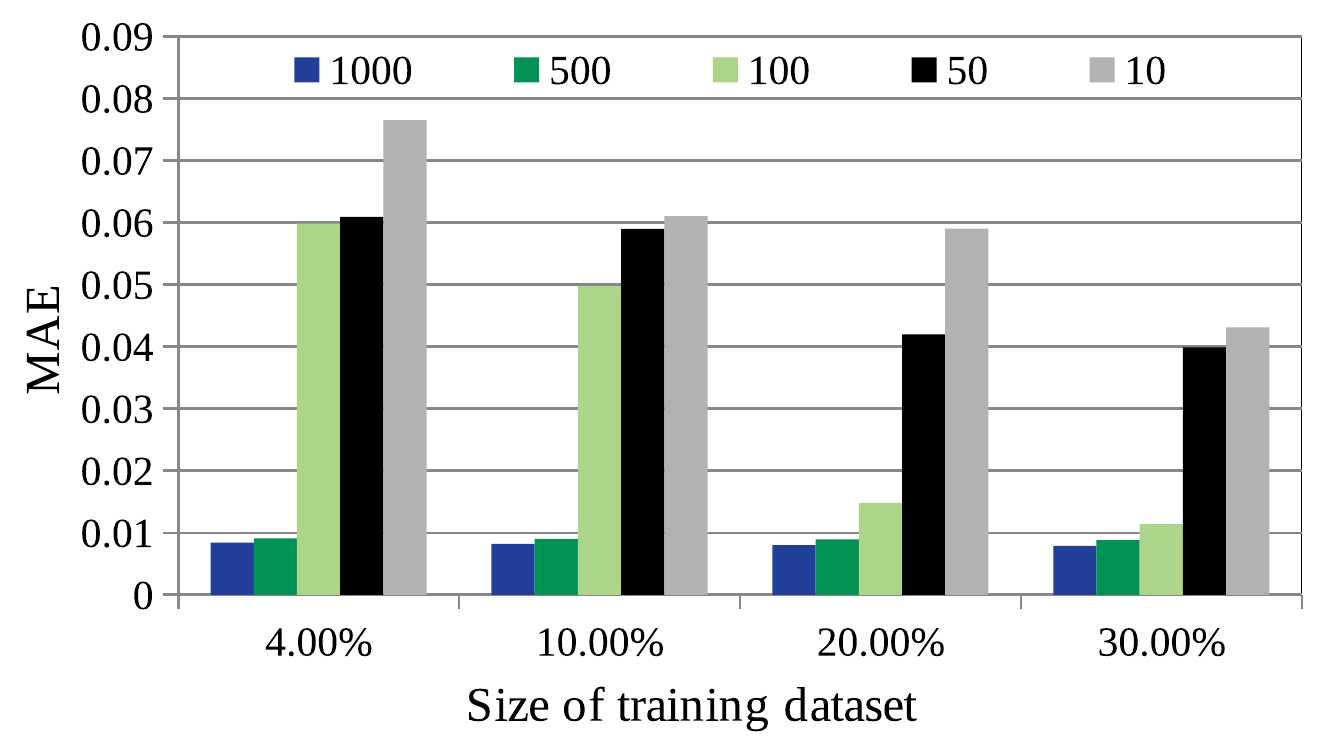}
 	(b)\includegraphics[width=0.8\linewidth]{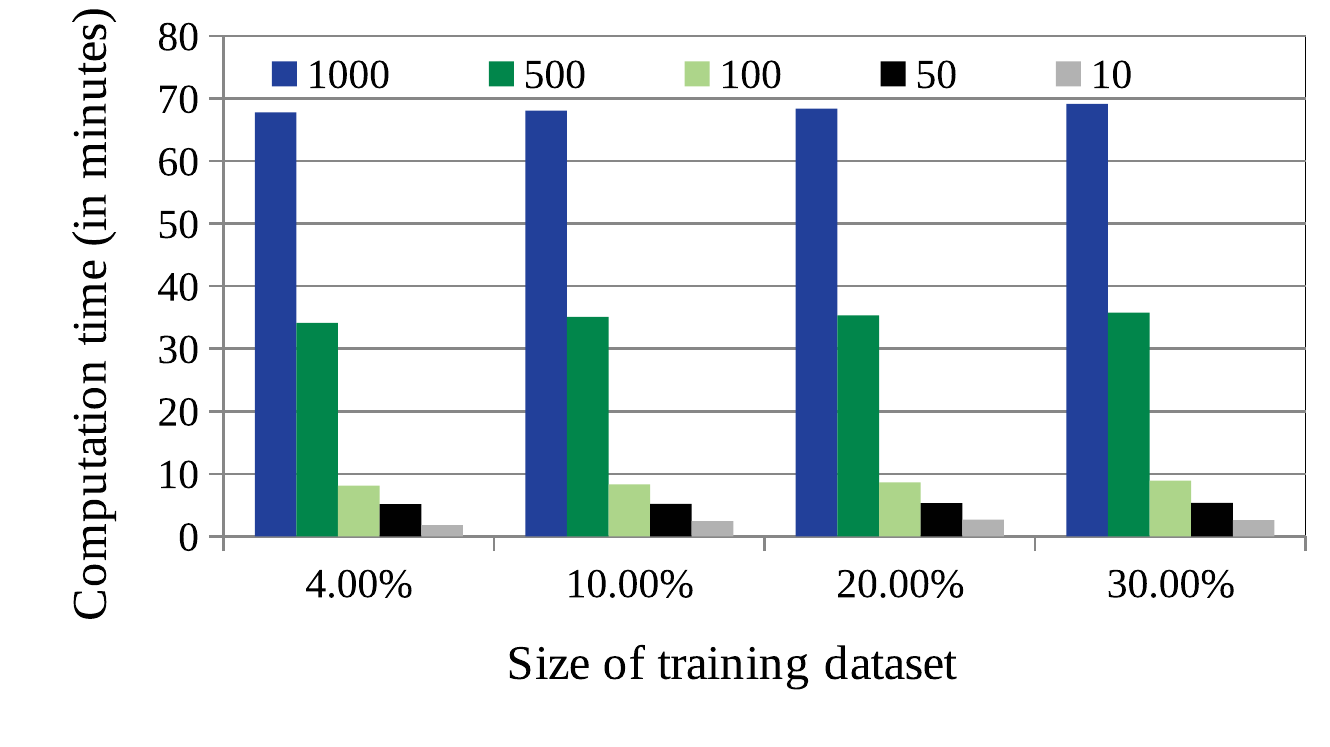}
\caption{Impact of the number of epochs of each NR of NRL-1 on (a) MAE, (b) computation time}
 	\label{fig:sca_level1_nn}
\end{figure} 
\begin{figure}[!t]
 	\centering
 	(a)\includegraphics[width=0.8\linewidth]{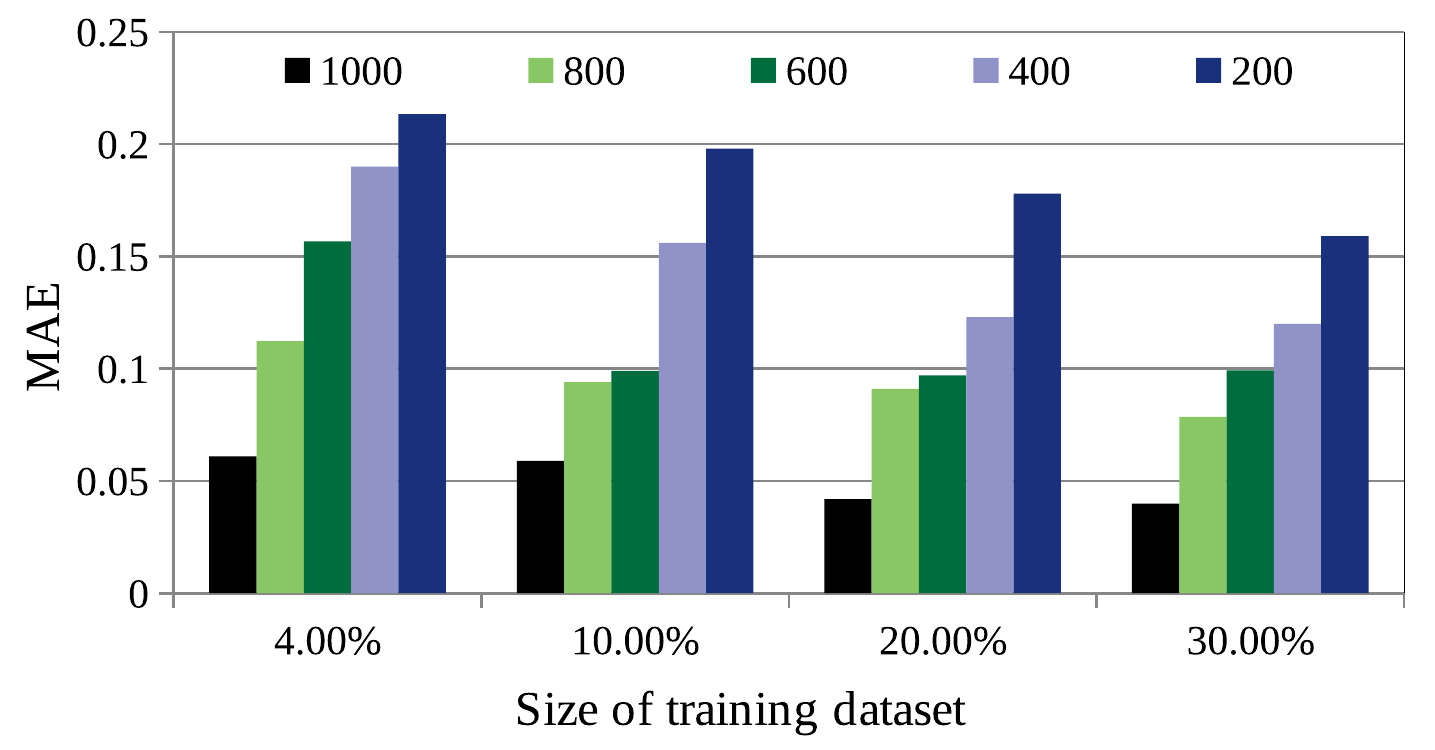}
 	(b)\includegraphics[width=0.8\linewidth]{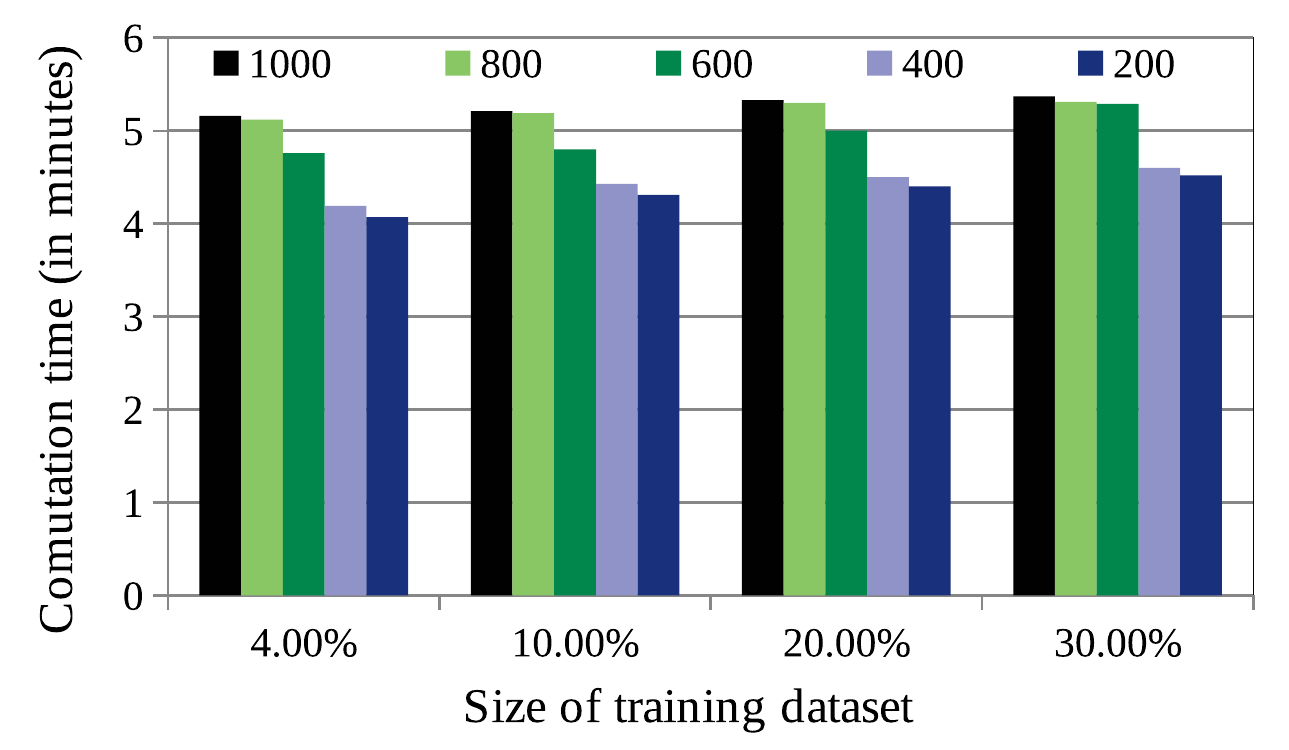}
   \caption{Impact of the number of epochs of NRL-2 on  (a) MAE, (b) computation time}
 	\label{fig:sca_level2_nn}
\end{figure}

{\emph{\ref{subsubsec:itp}.3}}~ {\textbf{\emph{Impact of Hyper-parameters of NR:}}}
Here, we mainly discuss about 3 hyper-parameters of the neural networks: 
(a) the number of epochs in each NR of NRL-1, 
(b) the number of epochs in NRL-2, and 
(c) the number of hidden layers of each NR in NRL-1. 
The other tunable parameters of the neural networks such as number of neurons in each hidden layer, the learning rate, momentum, minimal gradient value, etc. were also empirically decided.

Fig. \ref{fig:sca_level1_nn}(a) shows the change in MAE value with the increase in the number of epochs of each NR of NRL-1, while Fig. \ref{fig:sca_level1_nn}(b) shows the corresponding time to predict the QoS value by CAHPHF. 
It is observed from Fig.s \ref{fig:sca_level1_nn}(a), (b), the more time we spent to train the neural network, 
better was the solution quality. 
This statement is true at-least up to a certain number of epochs. 
Therefore, here also, we observed the time-quality trade-off. However, along with achieving a high prediction accuracy, 
{our framework should be robust. }
Therefore, considering the permitted time-limit, we need to choose the number of epochs of each NR of NRL-1. 
It may be noted, here also, 
in the worst case (i.e., the number of epochs of each NR of NRL-1 = 10), 
the MAE value obtained by CAHPHF was better than the state-of-the-art approaches of Table \ref{tab:compareDS1}.

Similar to the Fig.s \ref{fig:sca_level1_nn}(a), (b), 
we show the time-quality trade-off with respect to the number of epochs of NRL-2 in 
Fig.s \ref{fig:sca_level2_nn}(a), (b). 
The previous analysis is also valid here.

\begin{figure}[!t]
    \centering
 	\includegraphics[width=0.85\linewidth]{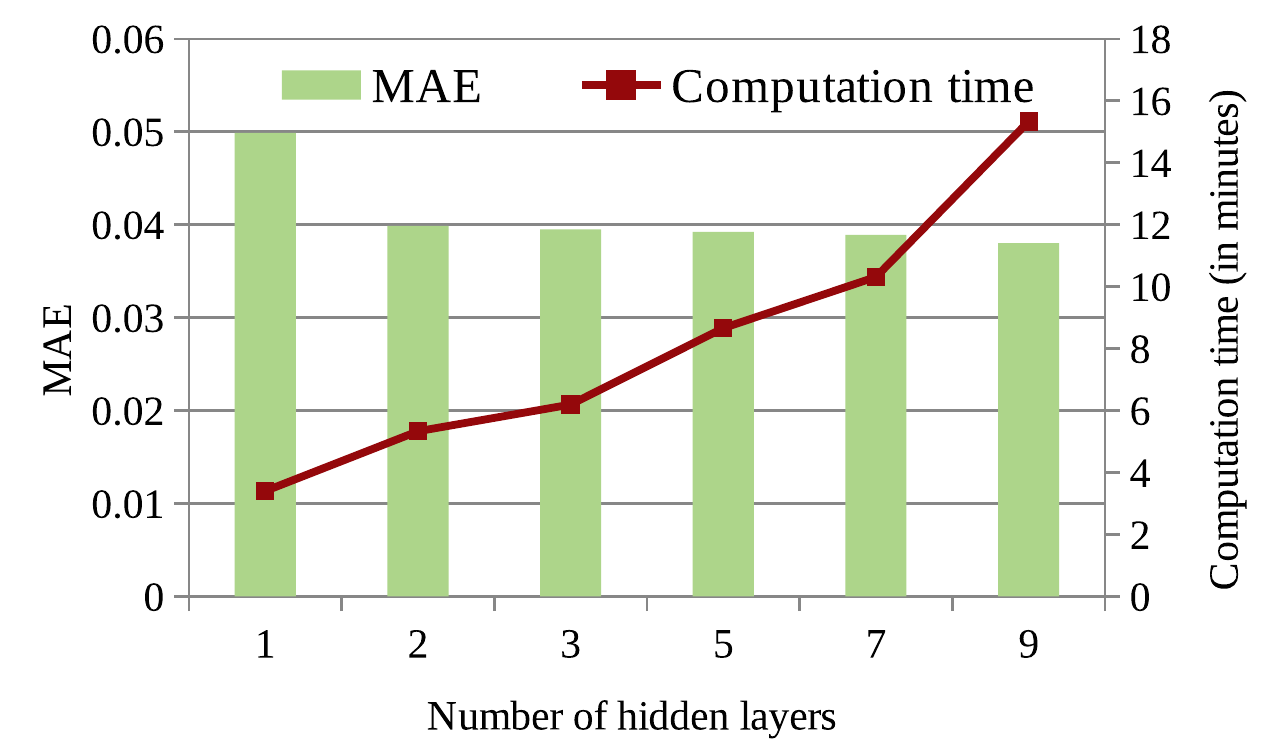}
\caption{Impact of the number of hidden layers of each NR of NRL-1 on (a) MAE, (b) computation time,  when the training data size = 30\%}
 	\label{fig:tp_hl}
\end{figure}

Fig. \ref{fig:tp_hl} shows the change in MAE value with respect to the number of hidden layers of each NR in NRL-1. 
The primary and secondary vertical axes of Fig. \ref{fig:tp_hl} represent MAE and computation time, respectively. 
As observed from this figure, the MAE value decreases with an increase in the number of hidden layers of the neural networks. However, with the increase in the number of hidden layers of the neural networks, computation time to generate the solution also increases. 
Therefore, here also, the number of hidden layers are chosen as per allowed time-limit.

In summary, our proposed CAHPHF, on the one hand, outperformed the state-of-the-art methods in terms of prediction accuracy, on the other hand, it generated the solution in a reasonable time limit.

\section{Related Work}\label{related}
\noindent
The QoS parameter plays a crucial role in various operations of the services life cycle, e.g., service selection \cite{DBLP:journals/access/AlayedDAMA19,DBLP:journals/access/DahanMA19,DBLP:journals/access/LiYGW19}, service composition \cite{DBLP:journals/tsc/ChenCC18,DBLP:journals/tsc/AminiO19,DBLP:journals/tweb/ChattopadhyayBB17}, service recommendation \cite{DBLP:conf/icsoc/ZouJNWPG18,DBLP:conf/IEEEscc/LiWSZ17}. 
The major limitation of most of these studies is that the service QoS values are assumed to be known.
However, the QoS value of service often changes across various factors, such as user \cite{DBLP:conf/icsoc/ZouJNWPG18,wu2017collaborative}, time \cite{DBLP:conf/IEEEscc/LiWSZ17}, location \cite{DBLP:conf/compsac/ChenMSXL17,DBLP:conf/IEEEscc/QiHSGL15,DBLP:journals/tsc/ZhengMLK13}, etc. 
Therefore, QoS prediction is an integral part of a service life cycle.


The QoS prediction has been studied broadly in the literature  \cite{DBLP:journals/tsc/LuoZWXZ19,DBLP:journals/access/YinZXZMY19,DBLP:conf/icws/XiongWLG17,DBLP:conf/icws/LeePB15,DBLP:conf/icws/WuQZWY15}.
The collaborative filtering is one of the major techniques to address the QoS prediction problem. The collaborative filtering can be of two different types: model-based and memory-based. The memory-based collaborative filtering \cite{sun2013personalized,wu2017collaborative,DBLP:conf/IEEEscc/LiWSZ17,DBLP:journals/soca/YuH16,DBLP:conf/icws/ZhouWGP15,DBLP:conf/icsoc/ZouJNWPG18} is further classified into two categories: user-based and service-based. In user-based collaborative filtering  \cite{DBLP:conf/icws/WuQWZY16,DBLP:journals/tsc/ZhengMLK13,breese1998empirical}, the similar users are taken into account to predict the QoS value, while in service-based collaborative filtering \cite{sarwar2001item}, the similar services are considered for QoS prediction. 
In this context, various similarity measures have been used to obtain the set of similar users or services, e.g., Pearson Correlation Coefficient (PCC) \cite{DBLP:conf/icws/ZhouWGP15}, cosine similarity measure \cite{DBLP:conf/icsoc/ChattopadhyayB19}, etc. 
Some enhanced similarity measures \cite{DBLP:conf/IEEEscc/LiWSZ17,DBLP:journals/soca/YuH16} have also been introduced to improve prediction accuracy. 
%
However, the stand alone user-based or service-based collaborative filtering may not be very effective concerning prediction accuracy, since it does not consider 
similar services (or, users) in a user-based (or, service-based) approach,  
and therefore, the prediction accuracy is not satisfactory. 
To achieve a higher prediction accuracy, both the techniques are combined further to predict the QoS values \cite{DBLP:conf/IEEEscc/LiWSZ17,DBLP:journals/soca/YuH16,DBLP:conf/icws/ZhouWGP15,DBLP:conf/icsoc/ZouJNWPG18}. However, the prediction accuracy of memory-based collaborative filtering approaches fall significantly for the sparse matrix.

\begin{table}[!t]\makegapedcells
\scriptsize
\caption{Brief literature review}
\centering
 \begin{tabular}{c V{3} c |c|c V{3} c|c V{3} c|c}
 \hline
 State-of- & \multicolumn{3}{c V{3}}{Methods} & \multicolumn{2}{cV{3}}{{QoS}} & \multicolumn{2}{c}{Dimension} \\
 \cline{2-8}
 the-Art & {\textbf{CF}} & {\textbf{MaF}} & {\textbf{Reg}} & \textbf{RT} & \textbf{TP} & {\textbf{Time}} & {\textbf{Location}}  \\ 
 \cline{5-6}
 \hline
 \hline
 \cite{DBLP:conf/icws/ZhouWGP15} & U + S &  &  & \checkmark & \checkmark &   &   \\ 
 \hline
 \cite{DBLP:conf/icsoc/ZouJNWPG18,wu2017collaborative,sun2013personalized} & U + S &  &  & \checkmark &  &  & \\ 
 \hline
 \cite{DBLP:journals/soca/YuH16} & U + S &  &  & \checkmark & \checkmark & \checkmark & \checkmark \\
 \hline
  \cite{DBLP:conf/compsac/ChenMSXL17} & U + S &  &  & \checkmark & \checkmark &   & \checkmark \\
 \hline
 \cite{DBLP:conf/icws/WuQWZY16} & U &  &  & \checkmark & \checkmark & \checkmark & \checkmark \\ 
 \hline
 \cite{DBLP:conf/IEEEscc/LiWSZ17} & U + S &  &  & \checkmark & \checkmark & \checkmark  &    \\ 
 
\hline
\cite{DBLP:journals/tsc/ZhengMLK13}& U & \checkmark &  & \checkmark & \checkmark &   & \checkmark  \\
\hline
\cite{DBLP:conf/wocc/LuoZXZ14} &  & \checkmark &  & \checkmark & \checkmark &   &  \\
\hline
\cite{DBLP:conf/colcom/Chen16a} &  &  & \checkmark & \checkmark & \checkmark &   &  \\
\hline
\cite{DBLP:conf/IEEEscc/QiHSGL15} & U + S & \checkmark &  & \checkmark &   &   & \checkmark \\
\hline
\cite{DBLP:conf/icws/XiongWLLH18} & U + S & \checkmark & \checkmark  & \checkmark & \checkmark & \checkmark  &  \checkmark \\
\hline
\cite{DBLP:conf/icsoc/ChattopadhyayB19} & U + S &  & \checkmark  & \checkmark & \checkmark &  &   \\
\hline
\cite{DBLP:conf/IEEEscc/ZhangSL0L16} &  &  & \checkmark & \checkmark & \checkmark & \checkmark &   \\
\hline
\cite{DBLP:journals/access/YinZXZMY19} & U + S &  \checkmark &  & \checkmark & \checkmark &  &   \\
\hline
\cite{breese1998empirical} & U &  &  & \checkmark & \checkmark &  &   \\
\hline
\cite{sarwar2001item} & S &  &  & \checkmark & \checkmark &  &   \\
\hline
 \cite{lo2012extended} & U + S & \checkmark &  & \checkmark &  &  &   \\
\hline
 \cite{wang2016multi} &  & \checkmark &  & \checkmark & \checkmark & \checkmark &   \\
\hline
\cite{DBLP:journals/access/ChenWXZZC19} &  & \checkmark &  & \checkmark & \checkmark & \checkmark & \checkmark  \\
\hline
\cite{DBLP:journals/access/GuoMCTH19} &  & \checkmark &  & \checkmark & \checkmark &  &   \\
\hline
\multicolumn{8}{r}{U: user-based; ~~ S: service-based; ~~Reg: regression}\\
\end{tabular}\label{tab:literature}
\end{table}

To resolve the problem with the sparse matrix and to improve the accuracy, a model-based collaborative approach is implemented. In the model-based collaborative filtering, a predefined model is adapted according to the given dataset. One such method is matrix factorization \cite{DBLP:conf/IEEEscc/QiHSGL15,DBLP:conf/wocc/LuoZXZ14,DBLP:journals/tsc/ZhengMLK13,DBLP:conf/icws/XiongWLLH18}, which is employed for solving the sparsity problem in memory-based collaborative filtering technique. The matrix factorization involves decomposing the QoS invocation log matrix into a low-rank approximation that makes further predictions. 
However, in the traditional matrix factorization method, the predicted value lacks accuracy.
Therefore, regularisation terms is included in the loss function of matrix factorization \cite{DBLP:conf/IEEEscc/QiHSGL15,DBLP:conf/wocc/LuoZXZ14,DBLP:journals/tsc/ZhengMLK13,lo2012extended} to avoid overfitting \cite{goodfellow2016deep} in learning process and  
to improve the prediction accuracy. 
To elevate this accuracy further, the collaborative filtering method is integrated with the matrix factorization, where the collaborative filtering utilizes the local information, and the matrix factorization uses global information for value prediction \cite{DBLP:conf/IEEEscc/QiHSGL15,DBLP:journals/tsc/ZhengMLK13,DBLP:conf/icws/XiongWLLH18}. 
Sometimes, the matrix factorization is combined with the recurrent neural network \cite{DBLP:conf/icws/XiongWLLH18} to obtain better results. 
In \cite{DBLP:conf/icws/XiongWLLH18}, a personalized LSTM (Long Short-Term Memory)-based matrix factorization is proposed to capture the temporal dependencies of both users and services to timely update prediction model with data. 
Some other approaches \cite{DBLP:conf/IEEEscc/QiHSGL15,DBLP:journals/tsc/ZhengMLK13} in the literature
have used a few other information (e.g., geographic location for neighborhood similarity calculation) in addition to the matrix factorization for improving accuracy level. On a similar note, for multi-dimensional QoS prediction problem, a tensor decomposition and reconstruction method \cite{wang2016multi} has been used for QoS values prediction. 
Although, matrix factorization can handle the sparsity problem, most of the times it suffers from loss of information \cite{DBLP:conf/wocc/LuoZXZ14}.

Another popular model-based technique is regression \cite{DBLP:conf/colcom/Chen16a}. In \cite{DBLP:conf/icsoc/ChattopadhyayB19,DBLP:conf/IEEEscc/ZhangSL0L16,DBLP:conf/icws/ShiZLC11}, the neural regression has been proposed to obtain better accuracy. 
Some neural network-based models exist in the literature to predict QoS value, e.g., back-propagation neural network \cite{DBLP:conf/colcom/Chen16a}, feed forward neural network \cite{DBLP:conf/icsoc/ChattopadhyayB19}, neural network with radial basis function \cite{DBLP:conf/IEEEscc/ZhangSL0L16}, etc. Predicting QoS values only on the basis of neural regression 
may not provide satisfactory outcome. 
Therefore, improvisation of the regression method has been introduced. For example, in \cite{DBLP:conf/icws/ShiZLC11}, a clustering of similar users on the basis of location along with the neural regression for QoS prediction has been proposed. In \cite{DBLP:conf/icsoc/ChattopadhyayB19}, a neural regression with filtering has been proposed, where a set of similar users and services are generated first, and then neural regression is employed to predict the QoS value. However, in \cite{DBLP:conf/icsoc/ChattopadhyayB19}, an ad-hoc approach is used to handle the sparsity problem. The authors in \cite{DBLP:journals/access/YinZXZMY19} has proposed QoS prediction with auto-encoder, where there is a 
model entitled that combines  
both model-based and memory-based approaches for QoS value prediction. However, the predicted QoS value is yet to reach the satisfactory level. 
Table \ref{tab:literature} provides a briefing on the reported works that addressed the prediction problem.

In contrast to the above approaches, our current paper addresses the QoS prediction problem by taking advantage of both the memory-based and model-based techniques. 
The proposed framework is of two-folds, hybrid filtering, followed by hierarchical prediction. Our filtering technique, on the one side, combines both user-based and service-based approaches, while on the other side, it is a coalition of user-intensive and service-intensive models to capture the priority of users and services. Moreover, to achieve better accuracy, our filtering technique leverages the location information of users and services. Furthermore, we apply a clustering technique to obtain a set of similar context-sensitive users and services. Our hierarchical prediction mechanism takes advantage of the model-based approach. To deal with the sparsity problem, we use collaborative filtering along with the matrix factorization to fill up the matrix. We then employ a hierarchical neural regression to predict the QoS value. On the one hand, our hierarchical neural regression fixes the QoS prediction problem. On the other hand, it helps in reducing the error in prediction. Our extensive experimental analysis also justifies the necessity of each segment of our framework.

\section{Conclusion}
\label{sec:conclusion}
\noindent
In this paper, we propose a hierarchical QoS prediction mechanism with hybrid filtering by leveraging the contextual information of users and services. Our approach takes benefit of the memory-based strategies by consolidating filtering and the model-based strategies by combining hierarchical prediction. Additionally, we handle the sparsity issue by filling up the absent values in the matrix using collaborative filtering and matrix factorization. Finally, we increase the prediction accuracy by aggregating the predicted values obtained by user-intensive and service-intensive modules using hierarchical neural regression. We performed extensive experiments on publicly available WS-DREAM benchmark datasets. The experimental results show that the proposed CAHPHF framework is better than the state-of-the-art approaches in terms of prediction accuracy, while justifying the requirement of each module of our framework. In the future, we will endeavor to work on a time-variant QoS prediction mechanism.

\bibliographystyle{IEEEtran}
\bibliography{ref}

\vspace{-0.5in}
\begin{IEEEbiography}[{\includegraphics[width=1in,height=1in,clip,keepaspectratio]{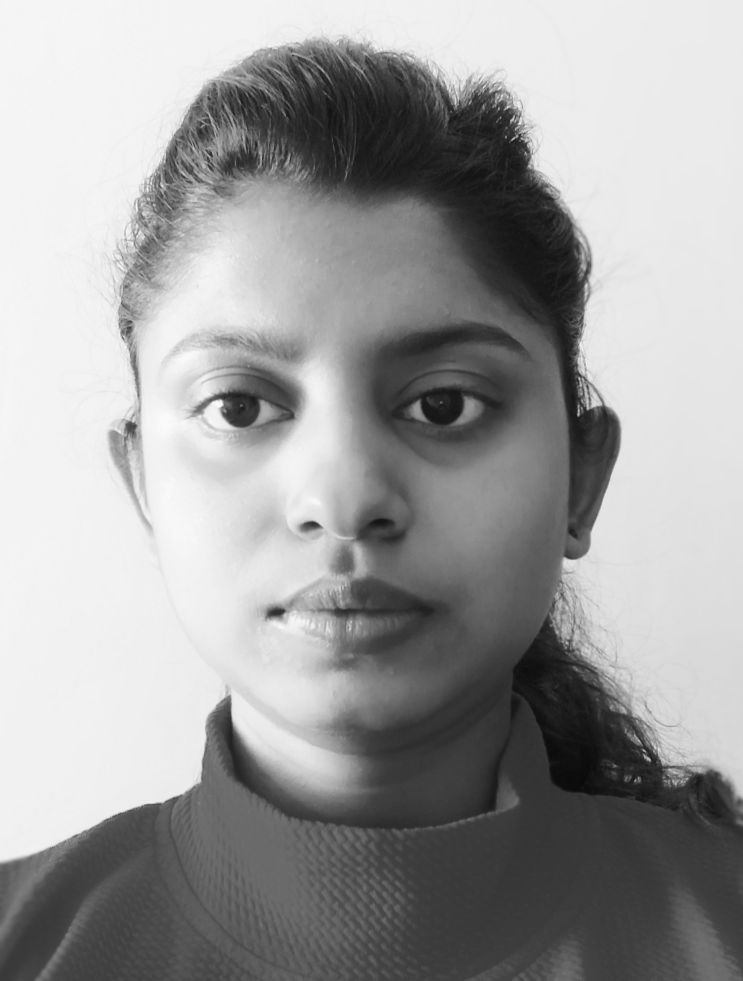}}]
{Ranjana Roy Chowdhury} received her B.Tech in Computer Science and Engineering (CSE) from Assam University, Silchar, India, in 2017. Currently, she is pursuing her M.Tech in CSE from Indian Institute of Information Technology Guwahati, India. Her research interests include Services Computing, Machine Learning-related subjects.
\end{IEEEbiography}
\vspace{-0.5in}
\begin{IEEEbiography}[{\includegraphics[width=1in,height=1in,clip,keepaspectratio]{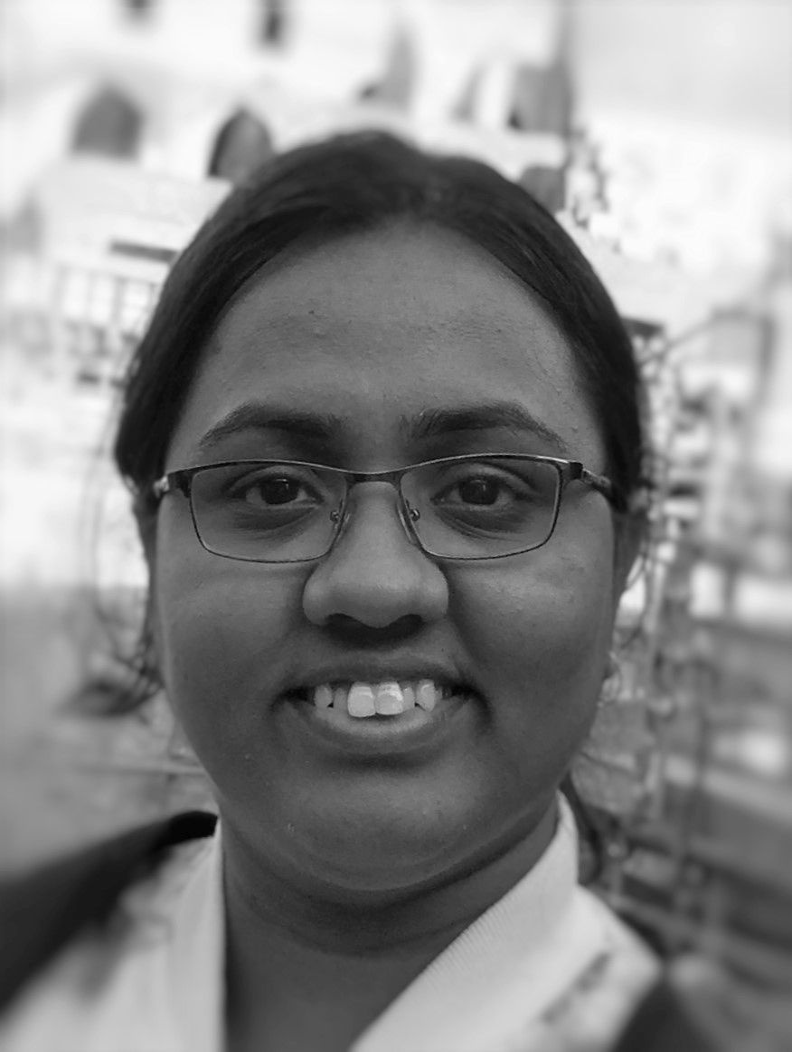}}]
{Soumi Chattopadhyay} (S'14, M'19) received her Ph.D. from Indian Statistical Institute, Kolkata in 2019. Currently, she is an Assistant Professor in Indian Institute of Information Technology Guwahati, India. Her research interests include Distributed and Services Computing, Artificial Intelligence, Formal Languages, Logic and Reasoning.
\end{IEEEbiography}

\vspace{-0.5in}
\begin{IEEEbiography}[{\includegraphics[width=1in,height=1in,clip,keepaspectratio]{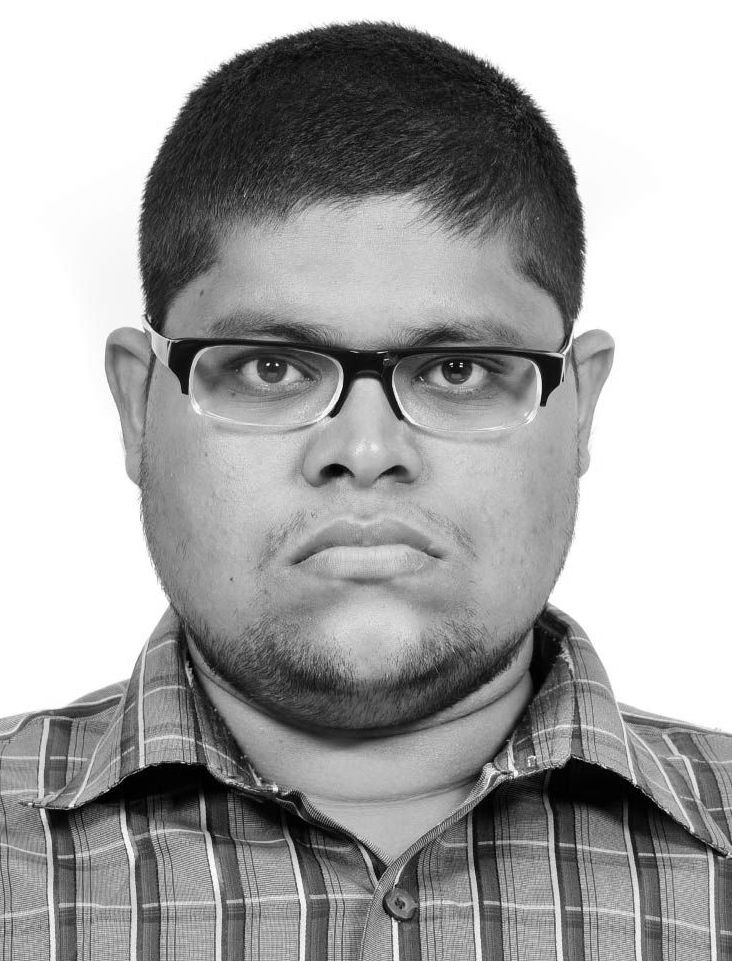}}]
{Chandranath Adak} (S'13, M'20) received his Ph.D. in Analytics from University of Technology Sydney, Australia in 2019. Currently, he is an Assistant Professor at Centre for Data Science, JIS Institute of Advanced Studies and Research, India. His research interests include Image Processing, Pattern Recognition, Document Image Analysis, and Machine Learning-related subjects.
\end{IEEEbiography}

\end{document}